\theoremstyle{plain}
\numberwithin{equation}{section}
\newtheorem{thm}{Theorem}[section]
\newtheorem{lem}[thm]{Lemma}
\newtheorem{cor}[thm]{Corollary}
\newenvironment{exam}[1]%
{\begin{flushleft}\textbf{Example #1}.\enspace}%
{\end{flushleft}}
\newcounter{cond}
\newcommand{\complex}{{\mathbb C}}
\newcommand{\positive}{{\mathbb N}}
\newcommand{\real}{{\mathbb R}}
\newcommand{\integers}{{\mathbb Z}}
\newcommand{\ascript}{{\mathcal A}}
\newcommand{\bscript}{{\mathcal B}}
\newcommand{\pscript}{{\mathcal P}}
\newcommand{\cupdot}{\mathbin{\cup{\hskip-5.4pt}^\centerdot}\,}
\newcommand{\bigcupdotimplus}{{\bigcup _{i=1}^{m+1}}{\hskip-9pt}^\centerdot{\hskip 9pt}}
\newcommand{\brac}[1]{\left\{#1\right\}}
\newcommand{\paren}[1]{\left(#1\right)}
\newcommand{\sqbrac}[1]{\left[#1\right]}
\begin{document}

\title{AN ANHOMOMORPHIC LOGIC\\ FOR QUANTUM MECHANICS}
\author{Stan Gudder\\ Department of Mathematics\\
University of Denver\\ Denver, Colorado 80208\\
sgudder@math.du.edu}
\date{}
\maketitle

\begin{abstract}
Although various schemes for anhomomorphic logics for quantum mechanics have been considered in the past we shall mainly concentrate on the quadratic or grade-2 scheme. In this scheme, the grade-2 truth functions are called coevents. We discuss properties of coevents, projections on the space of coevents and the master observable. We show that the set of projections forms an orthomodular poset. We introduce the concept of precluding coevents and show that this is stronger than the previously studied concept of preclusive coevents. Precluding coevents are defined naturally in terms of the master observable. A result that exhibits a duality between preclusive and precluding coevents is given. Some simple examples are presented.
\end{abstract}

\section{Introduction}  % Section 1
The study of anhomomorphic logics for quantum mechanics was initiated by R.~Sorkin \cite{sor94}. Since then, other investigations in the subject have been carried out \cite{gt09, gud09, sal02, sor07, sor09, sw08}. This work has usually been conducted in relation to the subject of quantum measure theory and is mainly motivated by the histories approach to quantum mechanics and quantum gravity \cite{gmh93, gr84, sor07}. One of the objectives of this subject is to describe the possible physical realities and to identify the actual physical reality.

The basic structure is given by a set of outcomes $\Omega$ together with an algebra $\ascript$ of subsets of
$\Omega$ whose elements are called events. It is generally agreed that possible realities are described by 1-0 functions from $\ascript$ to $\integers _2$ called truth functions. There are various schemes for choosing truth-functions that correspond to possible realities \cite{gt09, sor94}. Two of the most popular have been the linear and multiplicative schemes \cite{gt09, sor07, sor09}. We shall mainly concentrate on the quadratic scheme which has been rejected in the past but which we believe should be reconsidered. The elements of the chosen scheme are called coevents. Various methods have been devised for filtering out the unwanted coevents and selecting the actual reality. Three of these are called unitality, minimality and preclusivity \cite{gt09, sor07, sor09, sw08}.

In Section~2 we discuss the various schemes and give a reason for choosing the quadratic scheme. We call quadratic truth functions (grade-2) coevents. Section~3 discusses properties of these coevents. In Section~4 we consider projections on the space $\ascript ^*$ of coevents and observables. We show that the set of projections forms an orthomodular poset. We introduce the concept of the master observable and present its properties. Section~5 considers the concept of preclusivity. Preclusive coevents have already been discussed in the literature and we introduce a stronger notion that we call precluding coevents. This notion is defined naturally in terms of the master observable. We close this section with some simple examples and a result that exhibits a duality between preclusive and precluding coevents. For simplicity, the outcomes space $\Omega$ will be assumed to have finite cardinality.

\section{Truth Functions} % Section 2
Let $\Omega =\brac{\omega _1,\ldots ,\omega _n}$ be the sample space for some physical experiment or situation. We call the elements of $\Omega$ \textit{outcomes} and for simplicity we take $\Omega$ to be finite. The outcomes could correspond to particle locations or spin outcomes or particle trajectories, or fine-grained histories, etc. Subsets of $\Omega$ are called \textit{events} and we denote the set of all events $2^\Omega$ by $\ascript$. We use the notation $AB$ for the intersection $A\cap B$ and if $A\cap B=\emptyset$ we write $A\cupdot B=A\cup B$. We also use the notation $A+B$ for the symmetric difference $(AB')\cupdot (A'B)$ where $A'$ denotes the complement of $A$.

The logic for $\ascript$ gives the contact with reality; that is, the logic describes what actually happens. What actually happens may be determined by a truth function or 1-0 function $\phi\colon\ascript\to\integers _2$. If $\phi (A)=1$, then $A$ happens or $A$ is true and if $\phi (A)=0$, then $A$ does not happen or $A$ is false. Other terminology that is used is that $A$ occurs or does not occur. Now there are various admissible 1-0 functions depending on the situation or state of the system. For classical logic it is assumed that $\phi$ is a homomorphism. That is,
\begin{list} {(\arabic{cond})}{%
\setlength\itemindent{-7pt}}
\item [{(1)}]   %(1)
$\phi (\Omega )=1$\enspace (\textit{unital})
\item [{(2)}]   %(2)
$\phi (A+B)=\phi (A)+\phi (B)$\enspace (\textit{additive})
\item [{(3)}]   %(3)
$\phi (AB)=\phi (A)\phi (B)$\enspace (\textit{multiplicative})
\end{list}
Of course, in $\integers _2=\brac{0,1}$ addition is modulo $2$. If $\phi$ is a homomorphism, it can be shown that there exists an $\alpha\in\Omega$ such that $\phi (A)=1$ if and only if $\alpha\in A$. Defining the
\textit{containment map} $\alpha ^*\colon\ascript\to\integers _2$ by
\begin{equation*}
\alpha ^*(A)=
\begin{cases}1&\text{if $\alpha\in A$}\\ 0&\text{if $\alpha\notin A$}\end{cases}
\end{equation*}
we have that $\phi =\alpha ^*$. This is eminently reasonable for classical mechanics. For example a classical particle is definitely at a specific location $\alpha\in\Omega$ at any given time.

However, in quantum mechanics, assuming that $\phi$ must be a homomorphism can result in a contradiction. For example, consider a three-slit experiment where $\Omega =\brac{\omega _1,\omega _2,\omega _3}$ and
$\omega _i$ is the outcome that a quantum particle impinges the detection screen at a fixed small region $\Delta$ after going through slit $i$, $i=1,2,3$ \cite{sor07, sor09}. Then it is possible for
\begin{equation*}
\phi\paren{\brac{\omega _1,\omega _2}}=\phi\paren{\brac{\omega _2,\omega _3}}=0
\end{equation*}
If $\phi$ were a homomorphism, it follows that $\phi =0$; i.e., $\phi (A)=0$ for all $A\in\ascript$. Thus, nothing happens. Mathematically this gives a contradiction because by (1), $\phi (\Omega )=1$. This also gives a physical contradiction because there are certainly circumstances in which the particle is observed in $\Delta$. We again have that $\phi (\Omega )=1$.

The fundamental question becomes: What are the admissible 1-0 functions for quantum mechanics? We have seen that there are $n$ different homomorphisms corresponding to the $n$ classical states and we have argued that (1), (2) and (3) are too restrictive for quantum mechanics. On the other hand, there are $2^{2^n}$ possible 1-0 functions on
$\ascript$ and if we allow all of them, then the logic will have nothing to say. Thus, to have viable theory some restrictions must be put into place. In past studies, (1) is usually retained and either (2) or (3) are assumed
\cite{gt09, sor07, sor09}. In this work we shall not assume (1), (2) or (3) but shall postulate a generalization of (2). We shall also give an argument for the plausibility of this postulate. Since it would be unreasonable to consider a 1-0 function $\phi$ that satisfies $\phi (\emptyset )=1$ whenever we write $\phi\colon\ascript\to\integers _2$ we are assuming that $\phi (\emptyset )=0$.

But first it is instructive to examine the form of 1-0 functions that satisfy (2) or (3). If
$\phi ,\psi\colon\ascript\to\integers _2$ we define their \textit{sum} and \textit{product} by
$(\phi +\psi )(A)=\phi (A)+\psi (A)$ and $\phi\psi (A)=\phi (A)\psi (A)$. Of course, $\phi +\psi$ and $\phi\psi$ are again 1-0 functions. We can form polynomials in the containment maps $\alpha ^*$ for $\alpha\in\Omega$. For example
\begin{equation*}
\alpha ^*+\beta ^*+\alpha ^*\beta ^*+\alpha ^*\gamma ^*+\alpha ^*\beta ^*\gamma ^*
\end{equation*}
is a degree-3 polynomial. Since there are $2^{2^n}$ different polynomials, we conclude that every
$\phi\colon\ascript\to\integers _2$ can be uniquely represented by a polynomial (up to an ordering of the terms). The proof of parts of the following theorem are contained in \cite{gt09, sor07, sor09}. Also, this theorem and
Theorems~3.1, 3.3 and 3.6 are special cases of more general results in the field of combinatorial polarization
(\cite{dv09} and references therein). We include the proofs for the reader's convenience because they are shorter and more direct than the proofs for the more general results.

\begin{thm}       % Theorem 2.1
\label{thm21}
{\rm (a)}\enspace A nonzero $\phi\colon\ascript\to\integers _2$ satisfies (2) if and only if
$\phi=\alpha _1^*+\cdots +\alpha _m^*$ for some $\alpha _1,\ldots ,\alpha _m\in\Omega$.
{\rm (b)}\enspace $\phi\colon\ascript\to\integers _2$ with $\phi\ne 0,1$ satisfies (3) if and only if
$\phi =\alpha _1^*\cdots\alpha _m^*$ for some $\alpha _1,\ldots ,\alpha _m\in\Omega$.
\end{thm}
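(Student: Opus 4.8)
The plan is to treat (a) as a statement in linear algebra over $\integers _2$ and (b) as one about principal filters of the finite Boolean algebra $\ascript$; in each case both directions of the biconditional are short. For (a), the reverse implication holds because each containment map $\alpha ^*$ is additive ($\alpha\in A+B$ iff $\alpha$ lies in exactly one of $A,B$, so $\alpha ^*(A+B)=\alpha ^*(A)+\alpha ^*(B)$) and a sum of additive maps is additive. For the forward implication I would use that $(\ascript ,+)$ is a $\integers _2$-vector space with basis the singletons $\brac{\omega _i}$ — each $A$ being the disjoint union, hence the symmetric-difference sum, of its singletons — so that condition (2), which says $\phi$ is $\integers _2$-linear, forces $\phi$ to be determined by its values on singletons. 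Setting $S=\brac{\omega\in\Omega:\phi (\brac{\omega})=1}=\brac{\alpha _1,\dots ,\alpha _m}$, linearity gives $\phi (A)=\sum _{\omega\in A}\phi (\brac{\omega})=|A\cap S|\bmod 2=\paren{\alpha _1^*+\dots +\alpha _m^*}(A)$, and $\phi\ne 0$ forces $S\ne\emptyset$, so $m\ge 1$.

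For (b), the reverse implication follows because each $\alpha ^*$ is multiplicative ($\alpha ^*(AB)=1$ iff $\alpha\in A$ and $\alpha\in B$) and products of multiplicative maps are multiplicative, while $\alpha _1^*\cdots\alpha _m^*$ is not $0$ (it is $1$ on $\Omega$) and not $1$ (it is $0$ on $\emptyset$, since $m\ge 1$). For the forward implication I would let $T$ be the intersection of all events $A$ with $\phi (A)=1$; this collection is nonempty because $\phi\ne 0$, and iterating (3) gives $\phi (T)=1$. The crux is the claim that $\phi (A)=1$ iff $T\subseteq A$: if $T\subseteq A$ then $TA=T$, so $\phi (A)=1$ from $\phi (T)=\phi (TA)=\phi (T)\phi (A)$; conversely, $\phi (A)=1$ puts $A$ among the sets whose intersection is $T$, so $T\subseteq A$. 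Since $T=\emptyset$ would give $\phi =1$, we may write $T=\brac{\alpha _1,\dots ,\alpha _m}$ with $m\ge 1$, and then $\phi (A)=1$ iff $\alpha _1,\dots ,\alpha _m\in A$ iff $\paren{\alpha _1^*\cdots\alpha _m^*}(A)=1$; hence $\phi =\alpha _1^*\cdots\alpha _m^*$.

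I do not anticipate a real obstacle — the argument is conceptual rather than computational, the point being that the additive 1-0 functions are exactly the $\integers _2$-linear functionals on $\ascript$ (with the containment maps as dual basis) and the multiplicative ones, other than $0$ and $1$, are exactly the indicator functions of the principal filters generated by nonempty events. What does need care is bookkeeping of degenerate cases: excluding $\phi =0$ in (a) and $\phi =0,1$ in (b) so that the index set is genuinely nonempty ($m\ge 1$), and observing that in (b) the hypothesis $\phi\ne 1$ together with multiplicativity already yields $\phi (\emptyset )=0$, so the standing convention on $\phi (\emptyset )$ is not even needed there.
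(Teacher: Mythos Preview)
Your proof is correct and follows essentially the same route as the paper's: for (a) both read $\phi$ off from its values on singletons (the paper first observes that additivity over $+$ is equivalent to additivity over disjoint unions, which is the same content as your remark that a set is the disjoint union, hence symmetric-difference sum, of its singletons), and for (b) both intersect all $\phi$-true events to obtain a minimal true set and identify $\phi$ with the indicator of the principal filter it generates. The only cosmetic difference is that in (b) the paper first records monotonicity of a multiplicative $\phi$ as a lemma, whereas you get the implication $T\subseteq A\Rightarrow\phi(A)=1$ directly from $\phi(T)=\phi(TA)=\phi(T)\phi(A)$.
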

\begin{proof}
(a)\enspace We first show that $\phi\colon\alpha\to\integers _2$ is additive if and only if $\phi$ satisfies
\begin{list} {(\arabic{cond})}{%
\setlength\itemindent{-7pt}}
\item [{(4)}]   %(4)
$\phi (A\cupdot B)=\phi (A)+\phi (B)$\enspace for all disjoint $A,B\in\ascript$.
\end{list}
If $\phi$ is additive, then clearly $\phi$ satisfies (4). Conversely, if $\phi$ satisfies (4) then for any $A,B\in\ascript$ we have
\begin{equation*}
\phi (A)=\phi (AB'\cupdot AB)=\phi (AB')+\phi (AB)
\end{equation*}
Therefore,
\begin{align*}
\phi (A+B)&=\phi (AB'\cupdot A'B)=\phi (AB')+\phi (A'B)\\
  &=\phi (AB')+\phi (AB)+\phi (AB)+\phi (A'B)\\
  &=\phi (A)+\phi (B)
\end{align*}
so $\phi$ is additive. Now suppose $\phi\colon\ascript\to\integers _2$ is additive and nonzero. Then there exist
$\alpha _1,\ldots ,\alpha _m\in\Omega$ such that $\phi (\alpha _i)=1$, $i=1,\ldots ,m$ and $\phi (\omega )=0$ for
$\omega\in\brac{\alpha _1,\ldots ,\alpha _m}'$ where for simplicity we write
$\phi (\omega )=\phi\paren{\brac{\omega}}$. By (4), for an $A\in\ascript$ we have
\begin{equation*}
\phi (A)=\sum _{\omega _i\in A}\phi (\omega _i)=\sum _{\alpha _i\in A}\phi (\alpha _i)
  =\sum _{\alpha _i\in A}1=\sum _{i=1}^m\alpha _i^*(A)
\end{equation*}
Hence, $\phi =\alpha _1^*+\cdots +\alpha _m^*$ and the converse is clear.\newline
(b)\enspace If $\phi =\alpha _1^*\cdots\alpha _m^*$, then
\begin{align*}
\phi (AB)&=\alpha  _1^*(AB)\cdots\alpha _m^*(AB)=\alpha _1^*(A)\alpha _1^*(B)\cdots\alpha _m^*(A)\alpha _m^*(B)\\
  &=\alpha _1^*(A)\cdots\alpha _m^*(A)\alpha _1^*(B)\cdots\alpha _m^*(B)=\phi (A)\phi (B)
\end{align*}
so $\phi$ is multiplicative. Conversely, suppose that $\phi$ is multiplicative and $\phi\ne 0,1$. If $A\subseteq B$ we have $AB=A$ so that
\begin{equation*}
\phi (A)=\phi (AB)=\phi (A)\phi (B)\le\phi (B)
\end{equation*}
Since $\phi\ne 0$ there exists an $A\in\ascript$ with $\phi (A)=1$. Let
\begin{equation*}
B=\cap\brac{A\in\ascript\colon\phi (A)=1}
\end{equation*}
Then $B$ is the smallest set with $\phi (B)=1$; that is, $\phi (A)=1$ if and only if $B\subseteq A$. Since $\phi\ne 1$,
$B\ne\emptyset$. Letting $B=\brac{\alpha _1,\ldots ,\alpha _m}$ we have that $\phi (A)=1$ if and only if
$\alpha _i\in A$, $i=1,\ldots ,m$. Hence,
\begin{equation*}
\phi (A)=\alpha _1^*\cdots\alpha _m^*(A)\qedhere
\end{equation*}
\end{proof}

It follows from Theorem~\ref{thm21} that $\phi\colon\ascript\to\integers _2$ is a homomorphism if and only if
$\phi =\alpha ^*$ for some $\alpha\in\Omega$. We now consider a generalization of the additivity condition~(2). For
$\phi\colon\ascript\to\integers _2$ we define the $m$-\textit{point interference} $I_\phi ^m$ as the map from
$m$-tuples of distinct elements of $\Omega$ into $\integers _2$ given by
\begin{equation*}
I_\phi ^m(\alpha _1,\ldots ,\alpha _m)=\phi\paren{\brac{\alpha _1,\ldots ,\alpha _m}}
  +\phi (\alpha _1)+\cdots +\phi (\alpha _m)
\end{equation*}
where $m\in\positive$ with $m\ge 2$. Since it is clear that $\phi$ is additive if and only if $I_\phi ^m=0$ for all $m$ with $2\le m\le n$, we see that $I_\phi ^m$ gives a measure of the amount that $\phi$ deviates from being additive. An analogous definition is used to describe interference for quantum measures \cite{gud09}. Our basic postulate is that $m$-point interference is governed by two-point interferences, in the sense that
\begin{equation}         % equation (2.1)
\label{eq21}
I_\phi ^m(\alpha _1,\ldots ,\alpha _m)=\sum _{i<j=1}^mI_\phi ^2(\alpha _i,\alpha _i)
\end{equation}
We call \eqref{eq21} for all $2\le m\le n$ the \textit{two-point interference condition}.

We say that $\phi\colon\ascript\to\integers _2$ is \textit{grade}-2 \textit{additive} if 
\begin{equation*}
\phi (A\cupdot B\cupdot C)=\phi (A\cupdot B)+\phi (A\cupdot C)+\phi (B\cupdot C)+\phi (A)+\phi (B)+\phi (C)
\end{equation*}
for all mutually disjoint $A,B,C\in\ascript$. We also call (2) \textit{grade}-1 \textit{additivity} and clearly grade-1 additivity implies grade-2 additivity but we shall see that the converse does not hold. One can also define higher grade additivities but these shall not be considered here \cite{sal02}. 

\begin{thm}       % Theorem 2.2
\label{thm22}
A function $\phi\colon\ascript\to\integers _2$ is grade-2 additive if and only if $\phi$ satisfies the two-point interference condition.
\end{thm}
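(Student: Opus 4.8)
The plan is to extract a single ``reduction identity'' to which both conditions turn out to be equivalent: for every choice of distinct $\alpha _1,\ldots ,\alpha _m\in\Omega$,
\begin{equation*}
\phi\paren{\brac{\alpha _1,\ldots ,\alpha _m}}=\sum _{1\le i<j\le m}\phi\paren{\brac{\alpha _i,\alpha _j}}+\varepsilon _m\sum _{i=1}^m\phi (\alpha _i),\tag{$\star$}
\end{equation*}
where $\varepsilon _m\in\integers _2$ is $0$ when $m$ is even and $1$ when $m$ is odd (and $(\star)$ is read as the trivially true $\phi (\emptyset )=0$ for $m=0$ and $\phi\paren{\brac{\alpha _1}}=\phi (\alpha _1)$ for $m=1$). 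Informally, $(\star)$ says that $\phi$ on an arbitrary set is recoverable from its values on singletons and two-element sets. I would prove the theorem by showing: (i) the two-point interference condition holds if and only if $(\star)$ holds for all $m$ with $2\le m\le n$; and (ii) $\phi$ is grade-2 additive if and only if $(\star)$ holds for all such $m$.

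Step (i) is pure bookkeeping in $\integers _2$. Since each index $k$ lies in exactly $m-1$ of the two-element subsets $\brac{i,j}$ of $\brac{1,\ldots ,m}$, we have $\sum _{1\le i<j\le m}\sqbrac{\phi (\alpha _i)+\phi (\alpha _j)}=(m-1)\sum _{i=1}^m\phi (\alpha _i)$, hence
\begin{equation*}
\sum _{1\le i<j\le m}I_\phi ^2(\alpha _i,\alpha _j)=\sum _{1\le i<j\le m}\phi\paren{\brac{\alpha _i,\alpha _j}}+(m-1)\sum _{i=1}^m\phi (\alpha _i).
\end{equation*}
Writing $I_\phi ^m(\alpha _1,\ldots ,\alpha _m)=\phi\paren{\brac{\alpha _1,\ldots ,\alpha _m}}+\sum _{i=1}^m\phi (\alpha _i)$ and equating with the display, the equation $I_\phi ^m=\sum _{i<j}I_\phi ^2$ rearranges to exactly $(\star)$, using $m-2\equiv m\pmod 2$. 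This settles both directions of (i) at once.

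For the ``only if'' half of (ii), assume grade-2 additivity and prove $(\star)$ by induction on $m$. The cases $m\le 2$ are trivial and $m=3$ is precisely grade-2 additivity applied to the singletons $\brac{\alpha _1},\brac{\alpha _2},\brac{\alpha _3}$. For $m\ge 4$, apply grade-2 additivity to the mutually disjoint sets $\brac{\alpha _1,\ldots ,\alpha _{m-2}}$, $\brac{\alpha _{m-1}}$, $\brac{\alpha _m}$ and expand, via the induction hypothesis, the six resulting terms --- two of size $m-1$, namely $\phi\paren{\brac{\alpha _1,\ldots ,\alpha _{m-1}}}$ and $\phi\paren{\brac{\alpha _1,\ldots ,\alpha _{m-2},\alpha _m}}$, one of size $m-2$, one of size $2$, and two singletons. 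Counting multiplicities mod $2$: each two-element subset of $\brac{\alpha _1,\ldots ,\alpha _{m-2}}$ appears three times, while the pairs $\brac{\alpha _i,\alpha _{m-1}}$, $\brac{\alpha _i,\alpha _m}$ ($i\le m-2$) and $\brac{\alpha _{m-1},\alpha _m}$ appear once, which reassembles $\sum _{i<j}\phi\paren{\brac{\alpha _i,\alpha _j}}$; and the singleton coefficients collapse correctly because $\varepsilon _{m-2}=\varepsilon _m$ and $\varepsilon _{m-1}+1=\varepsilon _m$. Hence $(\star)$ holds for $m$, completing the induction.

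For the ``if'' half of (ii), let $A,B,C$ be mutually disjoint; we may assume all three nonempty, since if one is empty the grade-2 identity collapses to a tautology. Write each as the disjoint union of its singletons and apply $(\star)$ to each of the seven sets $A\cupdot B\cupdot C$, $A\cupdot B$, $A\cupdot C$, $B\cupdot C$, $A$, $B$, $C$. On the right-hand side of the grade-2 additivity equation every two-element subset of $A$ then occurs three times --- in the expansions of $\phi (A\cupdot B)$, $\phi (A\cupdot C)$ and $\phi (A)$ --- and similarly for two-element subsets of $B$ and of $C$; each ``cross'' pair (one point of $A$ and one of $B$, etc.) occurs once; and the coefficient of $\sum _{\alpha\in A}\phi (\alpha )$ is $\varepsilon _{|A|+|B|}+\varepsilon _{|A|+|C|}+\varepsilon _{|A|}=\varepsilon _{|A|+|B|+|C|}$, symmetrically for $B$ and $C$. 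This matches term by term the expansion of $\phi (A\cupdot B\cupdot C)$ by $(\star)$ with $m=|A|+|B|+|C|$, so grade-2 additivity holds. The only point requiring care in the whole argument is the choice of the parity constant $\varepsilon _m$ in $(\star)$; once that is right, every step is a multiplicity count modulo $2$, and the induction in the ``only if'' half of (ii) is the one place where the bookkeeping is not entirely immediate.
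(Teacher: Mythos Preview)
Your argument is correct and follows essentially the same route as the paper: both reduce the equivalence to the intermediate identity $(\star)$, which is exactly the paper's equation~(2.2), and then show separately that each side of the theorem is equivalent to it. The only organizational difference is that the paper defers the equivalence ``grade-2 additive $\Leftrightarrow(\star)$'' to Theorem~3.1 and Corollary~3.2 (proving first a version for arbitrary disjoint sets and using a slightly different induction step that groups the last two sets), whereas you prove the singleton case directly and self-containedly by peeling off two points at a time; both inductions and both parity bookkeeping computations are otherwise the same.
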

\begin{proof}
We shall show in Corollary~3.2 that $\phi$ is grade-2 additive if and only if
\begin{equation}     % equation (2.2)
\label{eq22}
\phi\paren{\brac{\alpha _1,\ldots ,\alpha _m}}=\sum _{i<j=1}^m\phi\paren{\brac{\alpha _i,\alpha _j}}
   +\tfrac{1}{2}\sqbrac{1-(-1)^m}\sum _{i=1}^m\phi (\alpha _i)
\end{equation}
for all $m\in\positive$ with $2\le m\le n$. But \eqref{eq22} is equivalent to
\begin{align}     % equation (2.3)
\label{eq23}
I_\phi ^m&(\alpha _1,\ldots ,\alpha _m)+\sum _{i=1}^m\phi (\alpha _i)\notag\\
  &=\sum _{i<j=1}^mI_\phi ^2(\alpha _i,\alpha _j)+(m-1)\sum _{i=1}^m\phi (\alpha _j)
  +\tfrac{1}{2}\sqbrac{1-(-1)^m}\sum _{i=1}^m\phi (\alpha _i)
\end{align}
Moreover, \eqref{eq23} is equivalent to
\begin{align*}
I_\phi ^m(\alpha _1,\ldots ,\alpha _m)
  &=\sum _{i<j=1}^mI_\phi ^2(\alpha _i,\alpha _j)+\sqbrac{m+\tfrac{1}{2}\paren{1-(-1)^m}}
  \sum _{i=1}^m\phi (\alpha _i)\\
   &=\sum _{i<j=1}^mI_\phi ^2(\alpha _i,\alpha _j)
\end{align*}
which is the two-point interference condition.
\end{proof}

\section{Grade-2 Additivity} % Section 3
The two-point interference condition is analogous to an interference condition that holds for quantum measures \cite{gud09} and in our opinion this condition should hold for all (finite) quantum systems. It follows from
Theorem~\ref{thm22} that the set of possible realities for a quantum system is described by the set $\ascript ^*$ of grade-2 additive functions from $\ascript$ to $\integers _2$. We call the elements of $\ascript ^*$ coevents. We first give the result that was needed in the proof of Theorem~\ref{thm22}.

\begin{thm}       % Theorem 3.1
\label{thm31}
A map $\phi\colon\ascript\to\integers _2$ is a coevent if and only if $\phi$ satisfies
\begin{equation}     % equation (3.1)
\label{eq31}
\phi\paren{A_1\cupdot\cdots\cupdot A_m}=\sum _{i<j=1}^m\phi (A_i\cupdot A_j)
  +\tfrac{1}{2}\sqbrac{1-(-1)^m}\sum _{i=1}^m\phi (A_i)
\end{equation}
for all $m\in\positive$ with $2\le m\le n$.
\end{thm}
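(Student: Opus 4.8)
The statement to prove is Theorem~\ref{thm31}: a map $\phi\colon\ascript\to\integers_2$ is grade-2 additive (i.e.\ a coevent) if and only if it satisfies the identity \eqref{eq31} for all $m$ with $2\le m\le n$. My plan is to prove the nontrivial direction---that grade-2 additivity implies \eqref{eq31}---by induction on $m$, the case $m=3$ being precisely the definition of grade-2 additivity, and the case $m=2$ being trivial. The converse direction (that \eqref{eq31} for $m=3$ is exactly grade-2 additivity) is immediate since the $\tfrac12[1-(-1)^3]=1$ coefficient reproduces the defining formula verbatim.

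For the inductive step, assume \eqref{eq31} holds for all values up to $m$, and consider pairwise disjoint $A_1,\dots,A_{m+1}$. The idea is to write $A_1\cupdot\cdots\cupdot A_{m+1}$ as a disjoint union of three blocks---for instance $B=A_1$, $C=A_2$, $D=A_3\cupdot\cdots\cupdot A_{m+1}$---apply grade-2 additivity to the triple $(B,C,D)$, and then expand each of the terms $\phi(B\cupdot D)$, $\phi(C\cupdot D)$, $\phi(D)$, $\phi(B\cupdot C\cupdot D)$ using the induction hypothesis (these involve at most $m$ blocks, after regrouping $D$ back into its constituents). Collecting the resulting double sums and single sums, one must check that the coefficient of each $\phi(A_i\cupdot A_j)$ becomes $1$ and the coefficient of each $\phi(A_i)$ becomes $\tfrac12[1-(-1)^{m+1}]$. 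This is a bookkeeping computation governed by the parity of $m$: the factor $\tfrac12[1-(-1)^k]$ is $1$ for odd $k$ and $0$ for even $k$, so the single-point terms only survive in the odd case, and one verifies the two cases ($m$ even, $m$ odd) separately.

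The main obstacle I anticipate is the coefficient accounting in the inductive step, especially tracking how the single-point terms $\phi(A_i)$ accumulate. When $D$ is expanded, terms like $\phi(B\cupdot D)=\phi(A_1\cupdot A_3\cupdot\cdots\cupdot A_{m+1})$ contribute their own pairwise sums (missing $A_2$) plus, depending on parity, a single-point sum over $\{A_1,A_3,\dots,A_{m+1}\}$; similarly for the other block-merged terms. One has to verify that across $\phi(B\cupdot C\cupdot D)$, $\phi(B\cupdot D)$, $\phi(C\cupdot D)$, $\phi(D)$ the net multiplicity of each $\phi(A_i\cupdot A_j)$ over $\integers_2$ is odd and of each $\phi(A_i)$ matches the target parity. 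A cleaner route that avoids some of this is to split into the two parity cases from the start and, when $m+1$ is even, choose the three blocks so that $D$ already contains an even number of the $A_i$ so no single-point corrections appear; when $m+1$ is odd, a single application suffices with one surviving single-point sum. Either way the verification is elementary mod $2$ arithmetic; I would present it as two short cases rather than one unified computation.
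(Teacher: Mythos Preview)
Your proposal is correct and follows the same inductive skeleton as the paper, but the decomposition in the inductive step differs. The paper groups the \emph{last two} sets into one, forming $m$ blocks $A_1,\dots,A_{m-1},(A_m\cupdot A_{m+1})$, applies the induction hypothesis once, and then expands the resulting $3$-block terms $\phi[A_i\cupdot(A_m\cupdot A_{m+1})]$ via grade-$2$ additivity. Your route is the ``dual'' one: keep $A_1,A_2$ separate, bundle the rest into $D$, apply grade-$2$ additivity to the triple $(A_1,A_2,D)$, and then invoke the induction hypothesis on $\phi(A_1\cupdot D)$, $\phi(A_2\cupdot D)$, $\phi(D)$ (with $m$, $m$, $m-1$ blocks respectively). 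Your method costs three induction calls versus the paper's one, but the mod~$2$ bookkeeping is of comparable length; the paper's version avoids having to track three different index sets simultaneously.

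One small slip in your writeup: you list $\phi(B\cupdot C\cupdot D)$ among the terms ``to expand using the induction hypothesis,'' but that is the left-hand side you are evaluating, not a term on the right; only $\phi(B\cupdot D)$, $\phi(C\cupdot D)$, $\phi(D)$ require the inductive expansion (and $\phi(B\cupdot C)=\phi(A_1\cupdot A_2)$ is already a pairwise term). With that corrected, the coefficient count works exactly as you outline: each $\phi(A_i\cupdot A_j)$ with $i,j\ge 3$ is hit three times (hence once in $\integers_2$), all other pairs once, and the single-point coefficients come out to $\epsilon_m+1$ for $i\in\{1,2\}$ and $\epsilon_{m-1}$ for $i\ge 3$, both equal to $\epsilon_{m+1}$ in either parity case.
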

\begin{proof}
If \eqref{eq31} holds, then $\phi$ is clearly a coevent. Conversely, assume that $\phi$ is a coevent. We now prove \eqref{eq31} by induction on $m$. The result holds for $m=2,3$. Suppose the result holds for $m\ge 2$, where $m$ is odd. Then
\begin{align*}
\phi\paren{\bigcupdotimplus A_i}&=\phi\sqbrac{A_1\cupdot\cdots\cupdot\paren{A_m\cupdot A_{m+1}}}\\
  &=\sum _{i<j=1}^{m-1}\phi\paren{A_i\cupdot A_j}
  +\sum _{i=1}^{m-1}\phi\sqbrac{A_i\cupdot\paren{A_m\cupdot A_{m+1}}}\\
  &\quad +\sum _{i=1}^{m-1}\phi (A_i)+\phi\paren{A_m\cupdot A_{m+1}}\\
  &=\sum _{i<j=1}^{m-1}\phi\paren{A_i\cupdot A_j}+\sum _{i=1}^{m-1}\phi\paren{A_i\cupdot A_m}\\
  &\quad +\sum _{i=1}^{m-1}\phi\paren{A_i\cupdot A_{m+1}}+\phi\paren{A_m\cupdot A_{m+1}}\\
  &=\sum _{i<j=1}^{m+1}\phi\paren{A_i\cupdot A_j}
\end{align*}
Suppose the result holds for $m\ge 2$ where $m$ is even. Then
\begin{align*}
\phi\paren{\bigcupdotimplus A_i}&=\phi\sqbrac{A_1\cupdot\cdots\cupdot\paren{A_m\cupdot A_{m+1}}}\\
  &=\sum _{i<j=1}^{m-1}\phi\paren{A_i\cupdot A_j}
  +\sum _{i=1}^{m-1}\phi\sqbrac{A_i\cupdot\paren{A_m\cupdot A_{m+1}}}\\
  &=\sum _{i<j=1}^{m-1}\phi \paren{A_i\cupdot A_j}+\sum _{i=1}^{m-1}\phi\paren{A_i\cupdot A_m}
  +\sum _{i=1}^{m-1}\phi\paren{A_i\cupdot A_{m+1}}\\
  &\quad +\phi\paren{A_m\cupdot A_{m+1}}+\sum _{i=1}^{m-1}\phi (A_i)+\phi (A_m)+\phi (A_{m+1})\\
  &=\sum _{i<j=1}^{m+1}\phi\paren{A_i\cupdot A_j}+\sum _{i=1}^{m+1}\phi (A_i)
\end{align*}
The result now follows by induction.
\end{proof}

\begin{cor}       % Corollary 3.2
\label{cor32}
A map $\phi\colon\ascript\to\integers _2$ is a coevent if and only if \eqref{eq22}  holds for all $2\le m\le n$.
\end{cor}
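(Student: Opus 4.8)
The plan is to obtain the two implications from different places: the ``only if'' direction from Theorem~\ref{thm31} by a trivial specialization, and the ``if'' direction from a direct but careful expansion. For the first direction I would note that if $\phi$ is a coevent then Theorem~\ref{thm31} gives \eqref{eq31} for every $m$ with $2\le m\le n$, and then specialize to the singletons $A_i=\brac{\alpha _i}$ for distinct $\alpha _1,\ldots ,\alpha _m\in\Omega$; since then $A_i\cupdot A_j=\brac{\alpha _i,\alpha _j}$ and $A_1\cupdot\cdots\cupdot A_m=\brac{\alpha _1,\ldots ,\alpha _m}$, equation \eqref{eq31} is literally \eqref{eq22}. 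No computation is needed here.

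For the converse I would assume \eqref{eq22} for all $2\le m\le n$ and show that $\phi$ is grade-2 additive, which by definition is what ``coevent'' means. After disposing of the trivial case in which one of the sets is empty, I would take mutually disjoint nonempty $A=\brac{a_1,\ldots ,a_p}$, $B=\brac{b_1,\ldots ,b_q}$, $C=\brac{c_1,\ldots ,c_r}$ (so $p+q+r\le n$) and apply \eqref{eq22} to each of the seven sets $A\cupdot B\cupdot C$, $A\cupdot B$, $A\cupdot C$, $B\cupdot C$, $A$, $B$, $C$ appearing in the grade-2 additivity identity, using the conventions that a pair-sum over a one-element set is $0$ and that $\tfrac{1}{2}\sqbrac{1-(-1)^1}=1$, so that the formula applies formally to singletons too. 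The ``pair'' contributions should then fall out by bookkeeping: each intra-set pair-sum (over pairs inside $A$, inside $B$, or inside $C$) occurs three times on the right-hand side, hence once modulo~$2$, while each inter-set pair-sum (over pairs straddling two of $A,B,C$) occurs exactly once, and together these match the pair part of the expansion of $\phi(A\cupdot B\cupdot C)$.

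The step I expect to be the main obstacle---indeed the only one requiring real care---is the bookkeeping of the ``correction'' coefficients $\tfrac{1}{2}\sqbrac{1-(-1)^m}$. I would set $\varepsilon _m=\tfrac{1}{2}\sqbrac{1-(-1)^m}$ and use that $\varepsilon _m\in\brac{0,1}$ with $\varepsilon _m\equiv m\pmod{2}$. Then the coefficient of $\sum _{a\in A}\phi(a)$ that is assembled on the right-hand side is $\varepsilon _{p+q}+\varepsilon _{p+r}+\varepsilon _p\equiv 3p+q+r\equiv p+q+r\equiv\varepsilon _{p+q+r}\pmod{2}$, which is exactly its coefficient in the expansion of $\phi(A\cupdot B\cupdot C)$; the analogous congruences hold for the $B$- and $C$-sums. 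With both the pair and the correction parts matching modulo~$2$, the grade-2 additivity identity holds, so $\phi$ is a coevent. As an alternative one can run the same parity count for $m$ general sets instead of three, obtaining \eqref{eq31} directly from \eqref{eq22} and then quoting Theorem~\ref{thm31}; either route, the difficulty is organizational rather than conceptual.
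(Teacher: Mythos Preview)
Your proposal is correct and follows essentially the same route as the paper's proof: the ``only if'' direction is the specialization $A_i=\brac{\alpha_i}$ in Theorem~\ref{thm31}, and the ``if'' direction is the paper's direct expansion of the six right-hand terms of the grade-2 additivity identity via \eqref{eq22}, with the same pair bookkeeping and the same parity identity for the correction coefficients (which the paper records as the displayed $\integers_2$-identity in $r,s,t$ and verifies by cases, while you verify it by the congruence $\varepsilon_m\equiv m\pmod 2$). The only cosmetic differences are your convention extending \eqref{eq22} to $m=1$ versus the paper's treatment of small cardinalities as a separate easy case, and your expanding $\phi(A\cupdot B\cupdot C)$ as a seventh term rather than keeping it as the target of the comparison.
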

\begin{proof}
If $\phi$ is a coevent, then $\phi$ satisfies \eqref{eq22} by letting $A_i=\brac{\alpha _i}$ in \eqref{eq31}. Conversely, suppose $\phi$ satisfies \eqref{eq22} and let $A,B,C\in\ascript$ be mutually disjoint with
$A=\brac{\alpha _1,\ldots ,\alpha _n}$, $B=\brac{\beta _1,\ldots ,\beta _s}$, $C=\brac{\gamma _1,\ldots ,\gamma _t}$. The special cases in which at least one of the sets $A$, $B$ or $C$ has cardinality less than two are easily treated so we assume their cardinalities are at least two. Then by \eqref{eq22} we have that
\begin{align*}
\phi&\paren{A\cupdot B}+\phi\paren{A\cupdot C}+\phi\paren{B\cupdot C}+\phi (A)+\phi (B)+\phi (C)\\
  &=\sum _{i=1}^r\sum _{j=1}^s\phi\paren{\brac{\alpha _i,\beta _j}}
  +\tfrac{1}{2}\sqbrac{1-(-1)^{r+s}}\sqbrac{\sum _{i=1}^r\phi (\alpha _i)+\sum _{i=1}^s\phi (\beta _i)}\\
  &\quad +\sum _{i=1}^r\sum _{j=1}^t\phi\paren{\brac{\alpha _i,\gamma _j}}
  +\tfrac{1}{2}\sqbrac{1-(-1)^{r+t}}\sqbrac{\sum _{i=1}^r\phi (\alpha _i)+\sum _{i=1}^t\phi (\gamma _i)}\\
  &\quad +\sum _{i=1}^s\sum _{j=1}^t\phi\paren{\brac{\beta _i,\gamma _j}}
  +\tfrac{1}{2}\sqbrac{1-(-1)^{s+t}}\sqbrac{\sum _{i=1}^s\phi (\beta _i)+\sum _{i=1}^t\phi (\gamma _i)}\displaybreak\\
  &\quad +\sum _{i<j=1}^r\phi\paren{\brac{\alpha _i,\alpha _j}}
  +\tfrac{1}{2}\sqbrac{1-(-1)^r}\sum _{i=1}^r\phi (\alpha _i)+\sum _{i<j=1}^s\phi\paren{\brac{\beta _i,\beta _j}}\\
  &\quad +\tfrac{1}{2}\sqbrac{1-(-1)^s}\sum _{i=1}^s\phi (\beta _i)
  +\sum _{i<j=1}^t\phi\paren{\brac{\gamma _i,\gamma _j}}+\tfrac{1}{2}\sqbrac{1-(-1)^t}\sum _{i=1}^t\phi (\gamma _i)\\
  &=\sum _{i<j=1}^r\phi\paren{\brac{\alpha _i,\alpha _j}}+\sum _{i<j=1}^s\phi\paren{\brac{\beta _i,\beta _j}}
  +\sum _{i<j=1}^t\phi\paren{\brac{\gamma _i,\gamma _j}}\\
  &\quad +\sum _{i=1}^r\sum _{j=1}^s\phi\paren{\brac{\alpha _i,\beta _j}}
  +\sum _{i=1}^r\sum _{j=1}^t\phi\paren{\brac{\alpha _i,\gamma _j}}
  +\sum _{i=1}^s\sum _{j=1}^t\phi\paren{\brac{\beta _i,\gamma _j}}\\
  &\quad +\tfrac{1}{2}\sqbrac{1-(-1)^{r+s+t}}
  \sqbrac{\sum _{i=1}^r\phi (\alpha _i)+\sum _{i=1}^s\phi (\beta _i)+\sum _{i=1}^t\phi (\gamma _i)}\\
  &=\phi\paren{A\cupdot B\cupdot C}
\end{align*}
where the second to last equality comes from the fact that in $\integers _2$ we have
\begin{align*}
&\tfrac{1}{2}\sqbrac{1-(-1)^{r+s}}(a+b)+\tfrac{1}{2}\sqbrac{1-(-1)^{r+t}}(a+c)+\tfrac{1}{2}\sqbrac{1-(-1)^{s+t}}(b+c)\\
  &\quad +\tfrac{1}{2}\sqbrac{1-(-1)^r}a+\tfrac{1}{2}\sqbrac{1-(-1)^s}b+\tfrac{1}{2}\sqbrac{1-(-1)^t}c\\
  &=\tfrac{1}{2}\sqbrac{1-(-1)^{r+s+t}}(a+b+c)
\end{align*}
for all $r,s,t\in\positive$, $a,b,c\in\integers _2$ which can be checked by cases.
\end{proof}

We call the set of coevents $\ascript ^*$ an \textit{anhomomorphic logic}. Various schemes for anhomomorphic logics have been developed in the literature \cite{gt09, sor07, sor09}. In fact, the present scheme was rejected in \cite{gt09} because in some examples there were not enough coevents available. The reason for this is that only minimal (or primitive) and unital coevents were considered. We disagree with this analysis and believe that these restrictions are completely unnecessary. However, we shall later consider another means for restricting coevents that has already been used, called preclusivity. We now give further properties of coevents.

\begin{thm}       % Theorem 3.3
\label{thm33}
A map $\phi\colon\ascript\to\integers _2$ is a coevent for $\Omega =\brac{\omega _1,\ldots ,\omega _n}$ if and only if $\phi$ is a first or second degree polynomial in the $\omega _i^*$, that is
\begin{equation}     % equation (3.2)
\label{eq32}
\phi =\sum _{i=1}^na_i\omega _i^*+\sum _{i,j=1}^nb_{ij}\omega _i^*\omega _j^*
\end{equation}
where $a_i,b_{ij}\in\integers _2$.
\end{thm}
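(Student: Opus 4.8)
The plan is to combine the uniqueness of the squarefree polynomial representation of a function on $\ascript$ with Theorems~\ref{thm21} and~\ref{thm22}. For the ``if'' direction I would first note that the equation defining grade-2 additivity is linear in $\phi$ over $\integers_2$, so $\ascript^*$ is a vector space over $\integers_2$ and any finite sum of coevents is again a coevent. Each containment map $\omega_i^*$ is a homomorphism, hence grade-1 additive, hence grade-2 additive, so $\omega_i^*\in\ascript^*$, and $\omega_i^*\omega_i^*=\omega_i^*$. It therefore suffices to show that $\psi:=\omega_i^*\omega_j^*$ with $i\ne j$ is a coevent. Since $\psi(X)=1$ exactly when $\brac{\omega_i,\omega_j}\subseteq X$, taking mutually disjoint $A,B,C$ each of $\omega_i,\omega_j$ lies in at most one of $A,B,C$: if $\brac{\omega_i,\omega_j}\not\subseteq A\cupdot B\cupdot C$ then every term in the grade-2 additivity identity vanishes, while if $\omega_i,\omega_j$ both lie in $A\cupdot B\cupdot C$ a short case check (both in one piece, or in two different pieces) shows the right-hand side equals $1=\psi(A\cupdot B\cupdot C)$. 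Hence $\omega_i^*\omega_j^*\in\ascript^*$, and every function of the form \eqref{eq32} is a coevent.

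For the ``only if'' direction, recall (as observed before Theorem~\ref{thm21}) that every $\phi\colon\ascript\to\integers_2$ with $\phi(\emptyset)=0$ has a unique representation as a sum of distinct squarefree monomials in the $\omega_i^*$. Given a coevent $\phi$, let $\psi$ be the sum of the monomials of degree at most $2$ in this representation of $\phi$, and set $\chi:=\phi+\psi$ (so $\chi=\phi-\psi$ over $\integers_2$). By the first part $\psi$ is a coevent, hence so is $\chi$, and by uniqueness every monomial occurring in $\chi$ has degree at least $3$. Such a monomial $\prod_{i\in S}\omega_i^*$ vanishes on every subset of $\Omega$ of cardinality at most $2$, so $\chi(\omega)=0$ for all $\omega\in\Omega$ and $\chi\paren{\brac{\alpha,\beta}}=0$ for all pairs, whence $I_\chi^2=0$. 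Since $\chi$ is a coevent, Theorem~\ref{thm22} gives $I_\chi^m(\alpha_1,\ldots,\alpha_m)=\sum_{i<j}I_\chi^2(\alpha_i,\alpha_j)=0$ for all $m$ with $2\le m\le n$, so $\chi$ is additive. By Theorem~\ref{thm21}(a), $\chi$ is then $0$ or a sum of distinct degree-1 monomials; since $\chi$ has only monomials of degree at least $3$, uniqueness of the representation forces $\chi=0$, and therefore $\phi=\psi$ has the form \eqref{eq32}.

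The only point requiring care is the bookkeeping in the second paragraph: one uses the uniqueness of the squarefree representation both to split $\phi$ cleanly into its low- and high-degree parts (so that $\chi$ genuinely has no monomial of degree $<3$) and again at the end to conclude that an additive function with no low-degree monomials must vanish. Everything else is routine; the case analysis verifying that $\omega_i^*\omega_j^*$ is grade-2 additive is the most computational step but is entirely elementary, and one could alternatively obtain it from Corollary~\ref{cor32} by checking \eqref{eq22} for $\psi=\omega_i^*\omega_j^*$.
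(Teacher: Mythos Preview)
Your argument is correct. The ``if'' direction matches the paper's: both note that $\ascript^*$ is closed under sums and check directly that each $\omega_i^*\omega_j^*$ is grade-2 additive (the paper simply asserts this is ``easy to check''; you supply the case analysis).

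For the ``only if'' direction the two proofs differ in strategy. The paper builds an explicit degree-$\le 2$ polynomial $\psi$ designed to agree with $\phi$ on every singleton and doubleton, and then invokes \eqref{eq22} (Corollary~\ref{cor32}) to conclude $\phi=\psi$. You instead take $\psi$ to be the low-degree part of the unique squarefree polynomial representation of $\phi$, set $\chi=\phi+\psi$, and argue that $\chi$ is a coevent vanishing on all one- and two-element sets; Theorem~\ref{thm22} then forces $I_\chi^m=0$, so $\chi$ is additive, and Theorem~\ref{thm21}(a) together with uniqueness of the representation gives $\chi=0$. Your route avoids the somewhat intricate explicit construction in the paper (with its indexing into several families of doubletons and the auxiliary set $W$) at the cost of leaning twice on the uniqueness of the polynomial representation announced before Theorem~\ref{thm21}. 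Logically both arguments rest on the same underlying fact---a coevent is determined by its values on sets of size at most two---but you reach it through Theorem~\ref{thm22} while the paper cites \eqref{eq22} directly. There is no circularity: Theorem~\ref{thm22} is proved via Corollary~\ref{cor32}, which follows from Theorem~\ref{thm31}, all of which are established before Theorem~\ref{thm33}.
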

\begin{proof}
It is easy to check that $\omega _i^*\omega _j^*$ are coevents and that the sum of coevents is a coevent. Hence, any map $\phi\colon\ascript\to\integers _2$ of the form \eqref{eq32} is a coevent. Conversely, suppose
$\phi\colon\ascript\to\integers _2$ is a coevent. Reorder the $\omega _i$ if necessary so that
\begin{align*}
&\phi(\omega _1)=\cdots =\phi (\omega _r)=1,\phi\paren{\brac{\omega _i,\omega _j}}
  =\cdots =\phi\paren{\brac{\omega _{i'},\omega _{j'}}}=1, i,j,i',j'\le r\\
  &\phi\paren{\brac{\omega _s,\omega _t}}=\cdots =\phi\paren{\brac{\omega _{s'},\omega _{t'}}}=1, s,s'\le r,\ t,t'>r\\
  &\phi\paren{\brac{\omega _u,\omega _v}}=\cdots =\phi\paren{\brac{\omega _{u'},\omega _{v'}}}=1, u,v,u',v'>r
\end{align*}
and $\phi$ is $0$ for all other singleton and doubleton sets. Define $\psi\colon\ascript\to\integers _2$ by
\begin{equation*}
\psi =\sum _{k=1}^r\omega _k^*+\omega _i^*\omega _j^*+\cdots +\omega _{i'}^*\omega _{j'}^*
  +\omega _u^*\omega _v^*+\cdots +\omega _{u'}^*\omega _{v'}^*
  +\sum _{k=1}^r\sum _{w\in W}\omega _k^*\omega _w^*
\end{equation*}
where $W$ is the set of indices that are not represented above. Then $\phi$ and $\psi$ are coevents that agree on singleton and doubleton sets. By \eqref{eq22} $\phi$ and $\psi$ coincide.
\end{proof}

We now illustrate Theorem~\ref{thm33} with an example. Let $\Omega =\brac{\omega _1,\ldots ,\omega _5}$ and suppose $\phi\in\ascript ^*$ satisfies $\phi (\omega _1)=\phi (\omega _2)=1$,
\begin{equation*}
\phi\paren{\brac{\omega _1,\omega _2}}=\phi\paren{\brac{\omega _2,\omega _3}}
  =\phi\paren{\brac{\omega _4,\omega _5}}=1
\end{equation*}
and $\phi$ is $0$ for all other singleton and doubleton sets. Define $\psi\in\ascript ^*$ by
\begin{equation*}
\psi=\omega _1^*+\omega _2^*+\omega _1^*\omega _2^*+\omega _4^*\omega _5^*+\omega _1^*\omega _3^*
  +\omega _1^*\omega _4^*+\omega _1^*\omega _5^*+\omega _2^*\omega _4^*+\omega _2^*\omega _5^*
\end{equation*}
Then $\phi$ and $\psi$ are coevents that agree on singleton and doubleton sets so by \eqref{eq22} $\phi$ and
$\psi$ coincide.

The next result follows from the proof of Theorem~\ref{thm33}.

\begin{cor}       % Corollary 3.4
\label{cor34}
Given any assignment of zeros and ones to the singleton and doubleton sets of
$\Omega =\brac{\omega _1,\ldots ,\omega _n}$, there exists a unique coevent $\phi\colon\ascript\to\integers _2$ that has these values.
\end{cor}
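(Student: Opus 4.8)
The plan is to derive Corollary~\ref{cor34} directly from Theorem~\ref{thm33} together with the identity~\eqref{eq22}, which together already encode both existence and uniqueness. The statement asserts that the $2^{n+\binom{n}{2}}$ possible assignments of values on singletons and doubletons are in bijection with the coevents; so the natural strategy is to show the assignment map $\phi\mapsto\bigl(\phi(\{\omega_i\}),\phi(\{\omega_i,\omega_j\})\bigr)_{i,j}$ from $\ascript^*$ to $\integers_2^{\,n}\times\integers_2^{\,\binom{n}{2}}$ is a bijection, and a bijection between finite sets can be established by exhibiting surjectivity and injectivity separately (or surjectivity plus a cardinality count, but here both directions are already available).

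First I would note injectivity: if $\phi,\psi\in\ascript^*$ agree on all singletons and all doubletons, then by Corollary~\ref{cor32} (equivalently, by~\eqref{eq22}) each of $\phi$ and $\psi$ is determined on every set $\{\alpha_1,\ldots,\alpha_m\}$ with $m\ge 2$ by its values on doubletons and singletons, and they agree on singletons by hypothesis, so $\phi(A)=\psi(A)$ for every $A\in\ascript$ with $|A|\ge 1$; since also $\phi(\emptyset)=0=\psi(\emptyset)$, we get $\phi=\psi$. This is exactly the closing sentence in the proof of Theorem~\ref{thm33}, so it can simply be cited. Next I would note surjectivity: given an arbitrary assignment of zeros and ones to the singletons and doubletons, the construction of $\psi$ in the proof of Theorem~\ref{thm33}---taking the sum of the relevant $\omega_k^*$, the relevant $\omega_i^*\omega_j^*$, and the correcting cross terms $\sum_{k\le r}\sum_{w\in W}\omega_k^*\omega_w^*$---produces, by that theorem, a genuine coevent, and by the verification carried out there it realizes precisely the prescribed values on singletons and doubletons. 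Combining the two observations gives both the existence of such a coevent and its uniqueness, which is the assertion.

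The only real subtlety---and the point I would be careful to state explicitly rather than wave at---is that the construction in Theorem~\ref{thm33} was written for the values of a \emph{given} coevent $\phi$, so one must observe that nothing in that construction used grade-2 additivity of the input data: it takes as raw material only an arbitrary prescription of which singletons and which of the three types of doubletons (both indices $\le r$, one index $\le r$ and one $>r$, both $>r$) receive the value $1$, where $r$ is the number of singletons assigned $1$. Re-reading the proof, the formula for $\psi$ is visibly a function of that prescription alone, and Theorem~\ref{thm33} guarantees any expression of the form~\eqref{eq32} is a coevent; so surjectivity follows with no new computation. Thus I expect no genuine obstacle---the corollary is essentially a restatement of Theorem~\ref{thm33} read in the ``synthetic'' direction---and the proof is a two-sentence appeal to that theorem and to~\eqref{eq22}.

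\begin{proof}
Existence follows from the proof of Theorem~\ref{thm33}: given an assignment of zeros and ones to the singleton and doubleton subsets of $\Omega$, let $r$ be the number of singletons receiving the value $1$ and, after reordering, define $\psi$ by the displayed formula in that proof using the prescribed doubleton values to determine which terms $\omega_i^*\omega_j^*$ occur and which cross terms $\omega_k^*\omega_w^*$ are needed. That construction uses only the prescribed data, and by Theorem~\ref{thm33} any map of the form~\eqref{eq32} is a coevent; by the verification in that proof, $\psi$ takes exactly the prescribed values on singletons and doubletons. For uniqueness, suppose $\phi_1,\phi_2\in\ascript^*$ both realize the given assignment. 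By Corollary~\ref{cor32}, each $\phi_k$ satisfies~\eqref{eq22} for all $2\le m\le n$, so the value of $\phi_k$ on any set of cardinality at least two is determined by its values on doubletons and singletons. Since $\phi_1$ and $\phi_2$ agree on all singletons and doubletons, they agree on every $A\in\ascript$ with $\lvert A\rvert\ge 1$, and $\phi_1(\emptyset)=0=\phi_2(\emptyset)$; hence $\phi_1=\phi_2$.
\end{proof}
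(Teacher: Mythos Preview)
Your argument is correct and is precisely the elaboration the paper has in mind: the paper gives no separate proof of Corollary~\ref{cor34} beyond the sentence ``The next result follows from the proof of Theorem~\ref{thm33},'' and you have simply spelled out that derivation, using the construction of $\psi$ for existence and \eqref{eq22} (via Corollary~\ref{cor32}) for uniqueness. Your explicit observation that the formula for $\psi$ depends only on the prescribed singleton and doubleton data---not on any prior grade-2 additivity---is exactly the point needed to turn the proof of Theorem~\ref{thm33} into a proof of surjectivity.
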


It follows from Theorem~\ref{thm33} that the anhomomorphic logic $\ascript ^*$ is a vector space over $\integers _2$ with dimension
\begin{equation*}
\dim (\ascript ^*)=n+\binom{n}{2}=\frac{n(n+1)}{2}
\end{equation*}
Hence, the cardinality of $\ascript ^*$ is $2^{n(n+1)/2}$ which is much smaller than the cardinality $2^{2^n}$ of the set of all 1-0 functions on $\Omega$.

\begin{lem}       % Lemma 3.5
\label{lem35}
{\rm (a)}\enspace $\phi\colon\ascript\to\integers _2$ is grade-1 additive if and only if
$\phi (A\cup B)=\phi (A)+\phi (B)+\phi (AB)$ for all $A,B\in\ascript$ and $\phi (\emptyset )=0$.
{\rm (b)}\enspace $\phi\colon\ascript\to\integers _2$ is grade-2 additive if and only if
\begin{equation}     % equation (3.3)
\label{eq33}
\phi (A\cup B)=\phi (A)+\phi (B)+\phi (AB)+\phi (A+B)+\phi (AB')+\phi (A'B)
\end{equation}
for all $A,B\in\ascript$.
\end{lem}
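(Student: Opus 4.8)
The plan is to reduce both statements to the disjoint-union conditions already in hand and then quote the equivalences established earlier. For part (a), I would use the fact, extracted from the proof of Theorem~\ref{thm21}(a), that grade-1 additivity of $\phi$ is equivalent to condition (4), namely $\phi(A\cupdot B)=\phi(A)+\phi(B)$ for disjoint $A,B$. If $\phi$ is grade-1 additive, then $\phi(\emptyset)=\phi(A+A)=\phi(A)+\phi(A)=0$; moreover, writing $A\cup B=(AB')\cupdot B$ and $A=(AB')\cupdot(AB)$ and applying (4) twice gives $\phi(A\cup B)=\phi(AB')+\phi(B)=\phi(A)+\phi(AB)+\phi(B)$. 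Conversely, if that identity holds together with $\phi(\emptyset)=0$, then for disjoint $A,B$ we have $AB=\emptyset$ and the identity collapses to (4); the argument in the proof of Theorem~\ref{thm21}(a) then shows $\phi$ is grade-1 additive.

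For part (b), the key observation is the purely set-theoretic decomposition $A\cup B=(AB')\cupdot(A'B)\cupdot(AB)$, together with $AB'\cupdot A'B=A+B$, $AB'\cupdot AB=A$, and $A'B\cupdot AB=B$. Assuming $\phi$ is grade-2 additive, I would apply the defining identity to the mutually disjoint triple $(AB',A'B,AB)$ and substitute these three facts; after collecting terms in $\integers_2$ this is exactly \eqref{eq33}. When one or two of $AB'$, $A'B$, $AB$ happen to be empty, one instead uses the corresponding degenerate instance of grade-2 additivity, in which an argument is $\emptyset$, together with $\phi(\emptyset)=0$.

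For the converse of (b), the trick is to represent an arbitrary mutually disjoint triple $X,Y,Z$ in the form $(AB',A'B,AB)$: set $A=X\cupdot Y$ and $B=Y\cupdot Z$. Then disjointness of $X,Y,Z$ yields $AB=Y$, $AB'=X$, $A'B=Z$, $A\cup B=X\cupdot Y\cupdot Z$, and $A+B=X\cupdot Z$, so \eqref{eq33} specializes precisely to the grade-2 additivity identity for $X,Y,Z$. The condition $\phi(\emptyset)=0$ needed here is part of our standing convention on maps $\ascript\to\integers_2$ (and in any case follows from \eqref{eq33} by taking $A=B$).

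I expect no genuine difficulty beyond bookkeeping: the one point to be careful about is the degenerate cases in the forward direction of (b), which are painless once one notes that grade-2 additivity as stated already permits empty arguments and that $\phi(\emptyset)=0$. The rest is routine cancellation modulo $2$.
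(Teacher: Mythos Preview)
Your proposal is correct and follows essentially the same route as the paper's proof. In both parts the forward direction applies the relevant additivity to the disjoint decomposition $A\cup B=(AB')\cupdot(A'B)\cupdot(AB)$ (or a two-piece version thereof), and for the converse of (b) the paper's choice $A_1=A\cupdot C$, $B_1=B\cupdot C$ is, up to relabeling, exactly your substitution $A=X\cupdot Y$, $B=Y\cupdot Z$.
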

\begin{proof}
(a)\enspace If $\phi$ is grade-1 additive, then
\begin{equation*}
\phi (A\cupdot B)=\phi (A+B)=\phi (A)+\phi (B)
\end{equation*}
Hence,
\begin{equation*}
\phi (A)=\phi\sqbrac{(AB)\cupdot (AB')}=\phi (AB)+\phi (AB')
\end{equation*}
for all $A,B\in\ascript$. We conclude that
\begin{align*}
\phi (A\cup B)&=\phi (AB')+\phi (A'B)+\phi (AB)\\
  &=\phi (A)+\phi (AB)+\phi (B)+\phi (AB)+\phi (AB)\\
  &=\phi (A)+\phi (B)+\phi (AB)
\end{align*}
Also, it is clear that $\phi (\emptyset )=0$. Conversely, suppose the given formulas hold. Then as before
\begin{align*}
\phi (A+B)&=\phi (AB')+\phi (A'B)=\phi (A)+\phi (AB)+\phi (B)+\phi (AB)\\
&=\phi (A)+\phi (B)
\end{align*}
so $\phi$ is grade-1 additive. (b)\enspace If $\phi$ is grade-2 additive, then
\begin{align*}
\phi (A\cup B)&=\phi\sqbrac{(AB)\cupdot (AB')\cupdot (A'B)}\\
  &=\phi (A+B)+\phi (A)+\phi (B)+\phi (AB)+\phi (AB')+\phi (A'B)
\end{align*}
Conversely, if \eqref{eq33} holds, then letting $A_1=A\cupdot C$, $B_1=B\cupdot C$ we have that
\begin{align*}
\phi&(A\cupdot B\cupdot C)=\phi (A_1\cup B_1)\\
  &=\phi (A_1+B_1)+\phi (A_1)+\phi (B_1)+\phi (A_1B'_1)+\phi (A'_1B_1)+\phi (A_1B_1)\\
  &=\phi (A\cupdot B)+\phi (A\cupdot C)+\phi (B\cupdot C)+\phi (A)+\phi (B)+\phi (C)
\end{align*}
which is grade-2 additivity.
\end{proof}

In the next result, $\ascript\times\ascript$ denotes the collection $2^{\Omega\times\Omega}$ of all subsets of
$\Omega\times\Omega$. It is easy to see that the map $\lambda$ in this result is not unique.

\begin{thm}       % Theorem 3.6
\label{thm36}
$\phi\colon\ascript\to\integers _2$ is a coevent if and only if there exists a grade-1 additive map
$\lambda\colon\ascript\times\ascript\to\integers _2$ such that $\phi (A)=\lambda (A\times A)$ for all $A\in\ascript$.
\end{thm}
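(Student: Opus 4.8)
The plan is to prove both implications by reducing to the polynomial representation of Theorem~\ref{thm33} in one direction and to the disjoint-union form of additivity (condition~(4) in the proof of Theorem~\ref{thm21}) in the other.

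For the ``if'' direction, suppose $\lambda\colon\ascript\times\ascript\to\integers_2$ is grade-1 additive with $\phi(A)=\lambda(A\times A)$ for all $A$. First I would record that, exactly as in the proof of Theorem~\ref{thm21}(a) but with ground set $\Omega\times\Omega$ in place of $\Omega$, grade-1 additivity of $\lambda$ is equivalent to $\lambda$ being additive over disjoint unions. The key elementary observation is then that for mutually disjoint $A,B,C\in\ascript$ the square $(A\cupdot B\cupdot C)\times(A\cupdot B\cupdot C)$ is the disjoint union of the nine blocks $X\times Y$ with $X,Y\in\brac{A,B,C}$, and that each of $(A\cupdot B)\times(A\cupdot B)$, $(A\cupdot C)\times(A\cupdot C)$, $(B\cupdot C)\times(B\cupdot C)$ splits into four such blocks. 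Applying $\lambda$ and using disjoint additivity, the expression $\phi(A\cupdot B)+\phi(A\cupdot C)+\phi(B\cupdot C)+\phi(A)+\phi(B)+\phi(C)$ becomes a $\integers_2$-sum of $\lambda$-values of blocks in which each of the three ``diagonal'' blocks $A\times A$, $B\times B$, $C\times C$ occurs twice (hence cancels) while each of the six ``off-diagonal'' blocks occurs once; what survives is precisely $\sum_{X,Y}\lambda(X\times Y)=\lambda\paren{(A\cupdot B\cupdot C)\times(A\cupdot B\cupdot C)}=\phi(A\cupdot B\cupdot C)$. Since also $\phi(\emptyset)=\lambda(\emptyset)=0$, this is grade-2 additivity, so $\phi$ is a coevent.

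For the ``only if'' direction, suppose $\phi$ is a coevent. By Theorem~\ref{thm33} write $\phi=\sum_{i}a_i\omega_i^*+\sum_{i,j}b_{ij}\omega_i^*\omega_j^*$ with $a_i,b_{ij}\in\integers_2$. The point is that for every $A\in\ascript$ one has $\omega_i^*(A)=(\omega_i,\omega_i)^*(A\times A)$ and $\omega_i^*(A)\omega_j^*(A)=(\omega_i,\omega_j)^*(A\times A)$, where $(\omega_i,\omega_j)^*$ is the containment map on $\Omega\times\Omega$. So I would define $\lambda:=\sum_i a_i(\omega_i,\omega_i)^*+\sum_{i,j}b_{ij}(\omega_i,\omega_j)^*$ on $\ascript\times\ascript$. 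As a $\integers_2$-linear combination of containment maps, $\lambda$ is additive, hence grade-1 additive, by the easy direction of Theorem~\ref{thm21}(a), and by construction $\lambda(A\times A)=\phi(A)$ for all $A$. The asserted non-uniqueness is then transparent, since adding to $\lambda$ any map such as $(\omega_1,\omega_2)^*+(\omega_2,\omega_1)^*$, which vanishes on every square $A\times A$, gives another admissible choice.

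I do not anticipate a genuine obstacle; the only care required is bookkeeping — in the first part one must use the disjoint-union form of grade-1 additivity (not the symmetric-difference form) and verify that the diagonal blocks genuinely cancel in pairs modulo $2$, and in the second part one must check that the substitutions $\omega_i^*\mapsto(\omega_i,\omega_i)^*$ and $\omega_i^*\omega_j^*\mapsto(\omega_i,\omega_j)^*$ are compatible when $i=j$, which they are.
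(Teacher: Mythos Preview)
Your ``only if'' direction is essentially the paper's own argument: use Theorem~\ref{thm33} to write $\phi$ as a degree-$\le 2$ polynomial and replace each monomial by the corresponding containment map on $\Omega\times\Omega$. Your remark on non-uniqueness is a nice addition.

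Your ``if'' direction is a genuinely different (and more direct) route than the paper's: the paper instead invokes Theorem~\ref{thm21}(a) to write $\lambda=\sum(\alpha_i\times\beta_j)^*$, computes $\phi(A)=\lambda(A\times A)$ as a degree-$\le 2$ polynomial in containment maps, and then appeals to Theorem~\ref{thm33}. Your block-decomposition argument avoids Theorem~\ref{thm33} entirely and checks grade-2 additivity by hand, which is conceptually cleaner.

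However, your bookkeeping is off in exactly the place you warned about. In the sum $\phi(A\cupdot B)+\phi(A\cupdot C)+\phi(B\cupdot C)+\phi(A)+\phi(B)+\phi(C)$, each diagonal block $\lambda(X\times X)$ occurs \emph{three} times (twice from the pairwise squares, once from $\phi(X)$), not twice. So the diagonals do not cancel; rather, $3\equiv 1\pmod 2$ means each survives once, and together with the six once-occurring off-diagonal blocks you get all nine blocks $\sum_{X,Y}\lambda(X\times Y)=\phi(A\cupdot B\cupdot C)$, which is what you want. Your stated conclusion is right, but the sentence ``each of the three diagonal blocks occurs twice (hence cancels)'' contradicts your own next clause ``what survives is precisely $\sum_{X,Y}\lambda(X\times Y)$'' and should be corrected.
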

\begin{proof}
If $\phi\colon\ascript\to\integers _2$ is a coevent, then by Theorem~\ref{thm33}, $\phi$ has the form
\begin{equation*}
\phi =\sum\alpha _i^*+\sum\beta _i^*\gamma _i^*
\end{equation*}
for $\alpha _i,\beta _i,\gamma _i\in\Omega$. Define $\lambda\colon\ascript\times\ascript\to\integers _2$ by
\begin{equation*}
\lambda =\sum (\alpha _i\times\alpha _i)^*+\sum (\beta _i\times\gamma _j)^*
\end{equation*}
Then by Theorem~\ref{thm21}(a), $\lambda$ is grade-1 additive and by Lemma~\ref{lem35}(a),
$\lambda (A\times A)=\phi (A)$ for all $A\in\ascript$. Conversely, suppose
$\lambda\colon\ascript\times\ascript\to\integers _2$ is grade-1 additive. By Theorem~\ref{thm21}(a), $\lambda$ has the form
\begin{equation*}
\lambda =\sum (\alpha _i\times\beta _j)^*
\end{equation*}
If $\phi\colon\ascript\to\integers _2$ satisfies $\phi (A)=\lambda (A\times A)$, then
\begin{align*}
\phi (A)&=\sum (\alpha _i\times\beta _j)^*(A\times A)=\sum\alpha _i^*(A)\beta _j^*(A)\\
  &=\sum\alpha _i^*(A)+\sum (\alpha _i^*\beta _j^*)(A)
\end{align*}
where the first summation on the right side is when $\alpha _i=\beta _j$. It follows from Theorem~\ref{thm21}(b) that
$\phi$ is a coevent.
\end{proof}

We now briefly discuss the possible strange behavior of coevents. Taking the particle location interpretation, the ``superposition'' $\omega _1^*+\omega _2^*$ states that the particle is at position~1 and at position~2 but not at $1$ or $2$. The ``entanglement'' $\omega _1^*\omega _2^*$ states that the particle is at position ~1 or at position~2 but if we look closely, it is not at either $1$ or $2$.

\section{Projections and Observables} % Section 4
In the sequel, $\Omega =\brac{\omega _1,\ldots ,\omega _n}$ is a finite set, $\ascript$ is the Boolean algebra of all subsets of $\Omega$ and $\ascript ^*$ is the anhomomorphic logic. We have seen in Section~3 that $\ascript^*$ is a $n(n+1)/2$ dimensional vector space over $\integers _2$ with basis consisting of the additive terms $\omega _i^*$ and the quadratic terms $\omega _i^*\omega _j^*$. A \textit{projection} on $\ascript ^*$ is a linear (or additive) idempotent map $P\colon\ascript ^*\to\ascript ^*$. That is, $P(\phi +\psi )=P\phi +P\psi$ for all
$\phi ,\psi\in\ascript ^*$ and $P^2=PP=P$. We denote the set of all projections on $\ascript ^*$ by
$\pscript (\ascript ^*)$. If $P,Q\in\pscript (\ascript ^*)$ with $PQ=QP$, then it is clear that $P+Q$ and $PQ$ are again projections. For $P\in \pscript (\ascript ^*)$ we define $P'\in\pscript (\ascript ^*)$ by $P'=I+P$. For
$P,Q\in\pscript (\ascript ^*)$ we define $P\le Q$ if $PQ=QP=P$. We call a partially ordered set a \textit{poset}. The greatest lower bound and least upper bound (if they exist) in a poset are denoted by $P\wedge Q$ and $P\vee Q$, respectively. For related work we refer the reader to \cite{ube09}

\begin{thm}       % Theorem 4.1
\label{thm41}
{\rm (a)}\enspace $\paren{\pscript (\ascript ^*),\le}$ is a poset.
{\rm (b)}\enspace For $P,Q\in\pscript (\ascript ^*)$ we have that $P''=P$, $P\wedge P'=0$ and $P\le Q$ implies
$Q'\le P'$.
{\rm (c)}\enspace If $PQ=QP$ then $P\wedge Q=PQ$ and $P\vee Q=P+Q+PQ$.
\end{thm}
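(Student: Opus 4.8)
The plan is to verify everything by direct $\integers _2$-linear computation, using only the idempotence $P^2=P$, the definitions of $P'$, $\le$, $\wedge$, $\vee$, and the characteristic-$2$ identity $x+x=0$.

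For (a), reflexivity $P\le P$ is exactly $P^2=P$; antisymmetry is immediate from the definition, since $P\le Q$ and $Q\le P$ both force $P=PQ=Q$; and for transitivity, if $P\le Q$ and $Q\le R$ then $PR=(PQ)R=P(QR)=PQ=P$ and $RP=(RQ)P=R(QP)=QP=P$, so $P\le R$. I would also record the free facts that the zero map and the identity are projections with $0\le P\le I$ for every $P$, since these are used below.

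For (b), I would first check that $P'=I+P$ really is a projection: it is additive, and $(I+P)^2=I+P+P+P^2=I+P$ in characteristic $2$. Then $P''=I+(I+P)=P$. For $P\wedge P'=0$: since $0\le R$ for every projection $R$, it suffices to show that any common lower bound $R$ of $P$ and $P'$ vanishes, and indeed $R=RP'=R(I+P)=R+RP=R+R=0$. For the last assertion, assume $P\le Q$, i.e. $PQ=QP=P$; then $Q'P'=(I+Q)(I+P)=I+P+Q+QP=I+Q=Q'$ and symmetrically $P'Q'=Q'$, so $Q'\le P'$.

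For (c), assume $PQ=QP$. I would first show $PQ$ is a projection: commutativity gives $(PQ)^2=P(QP)Q=P^2Q^2=PQ$. Next $PQ\le P$, since $P(PQ)=P^2Q=PQ$ and $(PQ)P=P(QP)=P^2Q=PQ$; symmetrically $PQ\le Q$, so $PQ$ is a lower bound of $\brac{P,Q}$. If $R$ is any lower bound then $R(PQ)=(RP)Q=RQ=R$ and $(PQ)R=P(QR)=PR=R$, so $R\le PQ$; hence $PQ=P\wedge Q$. Dually, put $S=P+Q+PQ$. Using $PQ=QP$ one reduces every mixed product ($PQP$, $QPQ$, $PQPQ$, etc.) to $PQ$, and then expanding $S^2$ all terms appearing an even number of times cancel in characteristic $2$, leaving $S^2=P+Q+PQ=S$, so $S$ is a projection. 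One computes $PS=SP=P$ and $QS=SQ=Q$ (again the doubled terms drop out), so $S$ is an upper bound of $\brac{P,Q}$; and if $R$ is any upper bound then $SR=PR+QR+PQR=P+Q+PQ=S$ and likewise $RS=S$, so $S\le R$. Hence $S=P\vee Q$.

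I do not anticipate a genuine obstacle: the argument is routine linear algebra over $\integers _2$. The two points requiring care are (i) confirming that each candidate — $P'$, $PQ$, and $P+Q+PQ$ — is idempotent (hence actually lies in $\pscript (\ascript ^*)$) before using it as a bound, and (ii) consistently invoking $x+x=0$ to annihilate the cross terms. The most tedious single step is checking that $S=P+Q+PQ$ is both idempotent and the least upper bound, which is where most of the bookkeeping lives.
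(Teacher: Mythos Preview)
Your proof is correct and, for parts (a), (b), and the $P\wedge Q$ half of (c), matches the paper's argument essentially line for line.

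The one genuine difference is how you obtain $P\vee Q=P+Q+PQ$. You do it by hand: check $S=P+Q+PQ$ is idempotent, check $PS=SP=P$ and $QS=SQ=Q$, then check that any upper bound dominates $S$. The paper instead invokes duality: since $PQ=QP$ forces $P'Q'=Q'P'$, the first half of (c) gives $P'\wedge Q'=P'Q'$, and then the order-reversing involution from (b) yields
\[
P\vee Q=(P'\wedge Q')'=(P'Q')'=I+(I+P)(I+Q)=P+Q+PQ.
\]
The paper's route is slicker and avoids the bookkeeping you flag as the most tedious step; it also makes explicit why the join must exist once the meet does. Your route, on the other hand, is self-contained and does not rely on having already established that ${}'$ is an order-reversing involution, so it is marginally more robust as a standalone argument. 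Either way the content is the same.
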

\begin{proof}
(a)\enspace Clearly, $P\le P$ for all $P\in\pscript (\ascript ^*)$. If $P\le Q$ and $Q\le P$, then
\begin{equation*}
P=PQ=QP=Q
\end{equation*}
If $P\le Q$ and $Q\le R$ then
\begin{align*}
PR&=PQR=PQ=P\\
\intertext{and}
RP&=RQP=QP=P
\end{align*}
Hence, $P\le R$ so $\paren{\pscript (\ascript ^*),\le}$ is a poset.
(b)\enspace Clearly $P''=P$. If $P\le Q$, then
\begin{equation*}
(I+P)(I+Q)=I+P+Q+PQ=I+P+Q+P=I+Q
\end{equation*}
Similarly, $(I+Q)(I+P)=I+Q$ so $Q'\le P'$. If $Q\le P,P'$, then
\begin{equation*}
Q=QP'=Q(I+P)=Q+QP=Q+Q=0
\end{equation*}
Hence, $P\wedge P'=0$.
(c)\enspace Since
\begin{equation*}
(PQ)P=(QP)P=QP=PQ
\end{equation*}
we have that $PQ\le P$ and similarly $PQ\le Q$. Suppose that $R\in\pscript (\ascript ^*)$ with $R\le P,Q$. Then
\begin{equation*}
RPQ=RQ=R
\end{equation*}
so that $R\le PQ$. Hence, $P\wedge Q=PQ$. By DeMorgan's law we have that
\begin{align*}
P\vee Q&=(P'\wedge Q')'=(P'Q')'=I+(I+P)(I+Q)\\
  &=I+I+P+Q+PQ=P+Q+PQ\qedhere
\end{align*}
\end{proof}

A poset $(\pscript ,\le )$ with a mapping ${}'\colon\pscript\to\pscript$ satisfying the conditions of
Theorem~\ref{thm41}(b) is called an \textit{orthocomplemented poset}. If $P\le Q'$ we write $P\perp Q$ and say that $P$ and $Q$ are \textit{orthogonal}. Of course, $P\perp Q$ if and only if $Q\perp P$. An orthocomplemented poset
$(\pscript ,\le , {}'\,)$ is called an \textit{orthomodular poset} if for $P,Q\in\pscript$ we have that $P\perp Q$ implies
$P\vee Q$ exists and $P\le Q$ implies
\begin{equation*}
Q=P\vee (Q\wedge P')
\end{equation*}
 
\begin{thm}       % Theorem 4.2
\label{thm42}
{\rm (a)}\enspace For $P,Q\in\pscript (\ascript ^*)$, $P\perp Q$ if and only if $PQ=QP=0$.
{\rm (b)}\enspace $\paren{\pscript (\ascript ^*),\le , {}'\,}$ is an orthomodular poset.
\end{thm}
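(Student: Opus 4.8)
The plan is to reduce both parts to the commutation criterion together with the explicit formulas for $\wedge$ and $\vee$ proved in Theorem~\ref{thm41}(c).

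For part~(a), I would argue by direct computation in the $\integers _2$-vector space $\ascript ^*$. Since $Q'=I+P$ (more precisely $Q'=I+Q$), we have $PQ'=P(I+Q)=P+PQ$ and $Q'P=(I+Q)P=P+QP$. Hence $P\perp Q$, which by definition means $PQ'=Q'P=P$, holds if and only if $P+PQ=P$ and $P+QP=P$, that is, if and only if $PQ=0$ and $QP=0$.

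For part~(b), Theorem~\ref{thm41} already shows that $\paren{\pscript (\ascript ^*),\le ,{}'}$ is an orthocomplemented poset, so only the two orthomodularity requirements remain. First suppose $P\perp Q$. By part~(a), $PQ=QP=0$; in particular $P$ and $Q$ commute, so Theorem~\ref{thm41}(c) applies and $P\vee Q$ exists (it equals $P+Q$). Next suppose $P\le Q$, so $PQ=QP=P$. Then $QP'=Q(I+P)=Q+QP=Q+P$ and likewise $P'Q=Q+PQ=Q+P$, so $Q$ and $P'$ commute, and Theorem~\ref{thm41}(c) gives $Q\wedge P'=QP'=P+Q$, which is a projection. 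Since $Q\wedge P'\le P'$ we have $(Q\wedge P')\perp P$, hence by the first requirement $P\vee (Q\wedge P')$ exists, and by part~(a) $P(Q\wedge P')=(Q\wedge P')P=0$. Applying Theorem~\ref{thm41}(c) once more,
\[
P\vee (Q\wedge P')=P+(Q\wedge P')+P(Q\wedge P')=P+(P+Q)=Q,
\]
since $P+P=0$ in $\integers _2$. This is the orthomodular identity, so $\paren{\pscript (\ascript ^*),\le ,{}'}$ is an orthomodular poset.

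The argument is essentially mechanical; the only point requiring attention is keeping the $\integers _2$ arithmetic straight and verifying at each step that the pair of projections in question commutes, so that the closed forms $P\wedge Q=PQ$ and $P\vee Q=P+Q+PQ$ of Theorem~\ref{thm41}(c) are legitimately available. I do not expect any real obstacle beyond this bookkeeping.
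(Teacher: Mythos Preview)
Your proposal is correct and follows essentially the same approach as the paper: both proofs reduce everything to the $\integers _2$ identity $P'=I+P$ and the closed forms from Theorem~\ref{thm41}(c). The only cosmetic difference is that in part~(b), when $P\le Q$, you directly verify $QP'=P'Q=P+Q$ and then invoke Theorem~\ref{thm41}(c) to get $Q\wedge P'=QP'$, whereas the paper first observes $Q'\le P'$ (hence $Q'\perp P$), deduces that $P\vee Q'$ exists, and obtains $Q\wedge P'$ via DeMorgan as $(P\vee Q')'$; both routes land on $Q\wedge P'=Q+P$ and finish identically.
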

\begin{proof}
(a)\enspace If $P\perp Q$, then
\begin{equation*}
P=P(I+Q)=P+PQ
\end{equation*}
Adding $P$ to both sides gives $PQ=0$. Similarly, $QP=0$. If $PQ=QP=0$, then
\begin{equation*}
P(I+Q)=P+PQ=P
\end{equation*}
Similarly, $(I+Q)P=P$ so $P\le Q'$.
(b)\enspace If $P\perp Q$, then by (a) we have that $PQ=QP=0$. Hence, by Theorem~\ref{thm41}(c) we conclude that $P\vee Q$ exists and $P\vee Q=P+Q$. Now assume that $P\le Q$. Since $Q'\le P'$ we have that $Q'\perp P$. Hence, as before $P\vee Q'$ exists. It follows that $Q\wedge P'=(P\vee Q')'$ exists. Since
\begin{equation*}
P\le P\vee Q'=(Q\wedge P')'
\end{equation*}
we have that $P\perp Q\wedge P'$. Hence, $P\vee (Q\wedge P')=P+Q\wedge P'$ exists. By Theorem~\ref{thm41}(c) we have that
\begin{equation*}
Q\wedge P'=QP'=Q(I+P)=Q+PQ
\end{equation*}
Therefore
\begin{equation*}
Q=P+(Q+P)=P+Q+PQ=P+Q\wedge P'=P\vee (Q\wedge P')
\end{equation*}
It follows that $\paren{\pscript (\ascript ^*),\le ,{}'\,}$ is an orthomodular poset.
\end{proof}

An orthomodular poset is frequently called a ``quantum logic.'' Quantum logics have been studied for over 45 years in the foundations of quantum mechanics \cite{bvn36, dp00, gud88, jau68, mac63, pir68, var68}. It is interesting that the present formalism is related to this older approach. In the quantum logic approach the elements of
$\pscript (\ascript ^*)$ are thought of as quantum propositions or events. When we later consider observables we shall see that there is a natural correspondence between elements of $\ascript$ and some of the elements of
$\pscript (\ascript ^*)$. These elements of $\pscript (\ascript ^*)$ then become quantum generalizations of the events in $\ascript$. In accordance with the quantum logic approach we say that $P,Q\in\pscript (\ascript ^*)$ are
\textit{compatible} if there exist mutually orthogonal elements $P_1,Q_1,R\in\pscript (\ascript ^*)$ such that $P=R_1\vee R$ and $Q=Q_1\vee R$. Compatible events describe events that can occur in a single measurement
\cite{jau68, mac63, pir68}.

\begin{thm}       % Theorem 4.3
\label{thm43}
$P,Q\in\pscript (\ascript ^*)$ are compatible if and only if $PQ=QP$.
\end{thm}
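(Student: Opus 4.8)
The plan is to prove both directions by direct computation in the $\integers_2$-algebra of additive endomorphisms of $\ascript^*$, using Theorem~\ref{thm42}(a) to convert orthogonality statements into the vanishing of products (both ways), and Theorems~\ref{thm41}(c) and~\ref{thm42}(a) to evaluate joins of orthogonal projections as sums.

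For the forward direction, suppose $P$ and $Q$ are compatible, so there are mutually orthogonal $P_1,Q_1,R\in\pscript(\ascript ^*)$ with $P=P_1\vee R$ and $Q=Q_1\vee R$. Since $P_1\perp R$ and $Q_1\perp R$, Theorems~\ref{thm42}(a) and~\ref{thm41}(c) give $P=P_1+R$ and $Q=Q_1+R$. Expanding,
\begin{equation*}
PQ=(P_1+R)(Q_1+R)=P_1Q_1+P_1R+RQ_1+R^2 ,
\end{equation*}
and by Theorem~\ref{thm42}(a) the three cross terms vanish while $R^2=R$, so $PQ=R$. The identical computation with the factors reversed gives $QP=R$, hence $PQ=QP$.

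For the converse, suppose $PQ=QP$ and set $R=PQ$, $P_1=P+PQ$, $Q_1=Q+PQ$. First I would check all three are projections: $R$ is a product of commuting projections, hence a projection, and using $PQ=QP$ (so that $PQP=PQ$ and $(PQ)^2=PQ$) a short computation gives $P_1^2=P_1$ and $Q_1^2=Q_1$; additivity of all three is automatic since sums and composites of additive maps are additive. Next I would verify mutual orthogonality via Theorem~\ref{thm42}(a): each of $P_1R$, $RP_1$, $Q_1R$, $RQ_1$, $P_1Q_1$, $Q_1P_1$ expands, using $PQ=QP$, into a sum of terms each equal to $PQ$ occurring an even number of times, hence vanishes mod $2$. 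Finally, since $P_1\perp R$ and $Q_1\perp R$, Theorems~\ref{thm42}(a) and~\ref{thm41}(c) give $P_1\vee R=P_1+R=(P+PQ)+PQ=P$ and similarly $Q_1\vee R=Q_1+R=Q$, so $P$ and $Q$ are compatible.

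I do not expect a real obstacle here; the only non-mechanical point is hitting on the right decomposition in the converse, namely taking the ``common part'' $R=PQ$ and the ``private parts'' $P_1=P+PQ$, $Q_1=Q+PQ$ — the $\integers_2$ version of the standard orthodecomposition $P=(P\wedge Q')\vee(P\wedge Q)$ familiar from orthomodular posets. Once that is in place, everything reduces to bookkeeping with $P^2=P$, $Q^2=Q$, $PQ=QP$ and arithmetic modulo $2$.
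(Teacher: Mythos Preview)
Your proof is correct and follows essentially the same approach as the paper's own proof: the same expansion of $(P_1+R)(Q_1+R)$ in the forward direction, and the same decomposition $R=PQ$, $P_1=P+PQ$, $Q_1=Q+PQ$ in the converse, with the same appeals to Theorems~\ref{thm41}(c) and~\ref{thm42}(a). You supply slightly more detail (checking that $P_1$ and $Q_1$ are indeed projections and spelling out the vanishing of the cross terms), but the argument is identical in substance.
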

\begin{proof}
If $P,Q$ are compatible, there exist $P_1$, $Q_1$ and $R\in\pscript (\ascript ^*)$ satisfying the given conditions. Then
$P=P_1+R$, $Q=Q_1+R$ so by Theorem~\ref{thm42}(a) we have that
\begin{equation*}
PQ=(P_1+R)(Q_1+R)=P_1Q_1+P_1R+RQ_1+R=R
\end{equation*}
Similarly, $QP=R$. Conversely, suppose that $PQ=QP$. Define $R=PQ$, $P_1=P+PQ$, $Q_1=Q+PQ$. It is easy to check that $P_1$, $Q_1$ and $R$ are mutually orthogonal elements of $\pscript (\ascript ^*)$. Applying
Theorem~\ref{thm41}(c) we conclude that
\begin{align*}
P&=(P+PQ)+PQ=P_1+R=P_1\vee R\\
\intertext{and}
Q&=(Q+PQ)+PQ=Q_1+R=Q_1\vee R
\end{align*}
Hence, $P$ and $Q$ are compatible.
\end{proof}

In the quantum logic approach, instead of $\pscript (\ascript ^*)$ we frequently have the projective space
$\pscript (H)$ of orthogonal (self-adjoint) projections on a complex Hilbert space $H$. For $P, Q\in\pscript (H)$ we define $P\le Q$ if $P=PQ$. It is well-known that Theorems~\ref{thm42} and \ref{thm43} hold for
$\paren{\pscript (H),\le}$. However, it does not immediately follow that these theorems hold for
$\paren{\pscript (\ascript ^*),\le}$ because the structure of the vector space $\ascript ^*$ over $\integers _2$ is quite different than that of an inner product space over the complex field $\complex$. Also, $\pscript (\ascript ^*)$ consists of all projections on $\ascript ^*$ while $\pscript (H)$ consists of only orthogonal projections. This is illustrated in Example~1 at the end of this section.

For further emphasis we give some examples of the differences between $\pscript (\ascript ^*)$ and $\pscript (H)$. For commuting projections $P,Q\in\pscript (\ascript ^*)$ we have $P+Q\in\pscript (\ascript ^*)$ which is not true in
$\pscript (H)$. For $P,Q\in\pscript (H)$ if $PQ=0$ then $PQ=QP$ which is not true in $\pscript (\ascript ^*)$ as is shown in Example~1 at the end of this section. Theorem~\ref{thm41}(c) does not hold in
$\paren{\pscript (H),\le}$. Finally, it is known that $\paren{\pscript (H),\le}$ is a lattice ($P\wedge Q$ and $P\vee Q$ always exist). However, it is not known whether $\pscript (\ascript ^*)$ is a lattice and this would be an interesting problem to investigate.

We have seen that $\brac{\omega _i^*,\omega _i^*\omega _j^*\colon i,j=1,\ldots ,n}$ forms a basis for the vector space $\ascript ^*$. For $\omega _i\in\Omega$ define the map $P(\omega _i)\colon\ascript ^*\to\ascript ^*$ by
$P(\omega _i)\omega _j^*=\omega _i^*\omega _j^*$,
\begin{equation*}
P(\omega _i)\omega _i^*\omega _j^*=P(\omega _i)\omega _j^*\omega _i^*=\omega _i^*\omega _j^*
\end{equation*}
and for $i,j,k$ distinct $P(\omega _i)\omega _j^*\omega _k^*=0$ and extended $P(\omega _i)$ to $\ascript ^*$ by linearity. It is easy to check that $P(\omega _i)\in\pscript (\ascript ^*)$, $i=1,\ldots ,n$. Moreover, one can check that
$P(\omega _i)+P(\omega _j)\in\pscript (\ascript ^*)$ and that
$P(\omega _i)P(\omega _j)=P(\omega _j)P(\omega _i)\in\pscript (\ascript ^*)$. For $A\in\ascript$ define
$P(A)\colon\ascript ^*\to\ascript ^*$ by
\begin{equation*}
P(A)=\sum\brac{P(\omega _i)+P(\omega _i)P(\omega _j)\colon\omega _i,\omega _j\in A, i<j}
\end{equation*}
It follows that $P(A)\in\pscript (\ascript ^*)$ for all $A\in\ascript$. For example,
\begin{align*}
P&\paren{\brac{\omega _1,\omega _2,\omega _3}}\\
  &=P(\omega _1)+P(\omega _2)+P(\omega _3)+P(\omega _1)P(\omega _2)+P(\omega _1)P(\omega _3)
  +P(\omega _2)P(\omega _3)
\end{align*}
The map $P\colon\ascript\to\pscript (\ascript ^*)$ given by $A\mapsto P(A)$ is called the
\textit{master observable}. By convention $P(\emptyset )=0$ and one can verify that $P(A)P(B)=P(B)P(A)$ for all
$A,B\in\ascript$.

In general, $P(\cdot )$ is not additive or multiplicative. For example letting $A=\brac{\omega _1}$,
$B=\brac{\omega _2}$ we have that
\begin{align*}
P(AB)&=0\ne\omega _1^*\omega _2^*=P(A)P(B)\\
\intertext{and}
P(A+B)&=P\paren{\brac{\omega _1,\omega _2}}=\omega _1^*+\omega _2^*+\omega _1^*\omega _2^*\\
&\ne\omega _1^*+\omega _2^*=P(A)+P(B)
\end{align*}
As usual, we call a function $f\colon\Omega\to\real$ a \textit{random variable}. A random variable corresponds to a measurement applied to the physical system described by $(\Omega ,\ascript )$. Denoting the Borel algebra of subsets of $\real$ by $\bscript (\real )$, for a random variable $f$, we define
$P^f\colon\bscript (\real )\to\pscript (\ascript ^*)$ by $P^f (B)=P\sqbrac{f^{-1}(B)}$. Thus, $P^f=P\circ f^{-1}$ and we call $P^f$ the \textit{observable corresponding} to $f$. The next result summarizes the properties of $P(\cdot )$.

\begin{thm}       % Theorem 4.4
\label{thm44}
{\rm (a)}\enspace $P(A\cup B)=P(A)\vee P(B)=P(A)+P(B)+P(A)P(B)$ for all $A,B\in\ascript$.
{\rm (b)}\enspace $P(\cdot )$ is unital, that is, $P(\Omega )=I$.
{\rm (c)}\enspace $P(\cdot )$ is strongly monotone, that is, $P(A)\le P(B)$ if and only if $A\subseteq B$.
{\rm (d)}\enspace $P(\cdot )$ is grade-2 additive, that is,
\begin{equation*}
P(A\cupdot B\cupdot C)=P(A\cupdot B)+P(A\cupdot C)+P(B\cupdot C)+P(A)+P(B)+P(C)
\end{equation*}
\end{thm}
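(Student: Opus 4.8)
The plan is to deduce all four parts from a single structural fact about the generators $P(\omega _i)$. Recall that by definition
\[
P(A)=\sum _{\omega _i\in A}P(\omega _i)+\sum _{\substack{\omega _i,\omega _j\in A\\ i<j}}P(\omega _i)P(\omega _j),
\]
and that the $P(\omega _i)$ commute pairwise. The key lemma I would prove first is that $P(\omega _i)P(\omega _j)P(\omega _k)=0$ whenever $i,j,k$ are distinct; this is a finite check — apply the left side to each basis vector $\omega _l^*$ and $\omega _l^*\omega _m^*$ and follow the image through the three factors, observing that some factor $P(\omega )$ always meets a product $\omega _p^*\omega _q^*$ whose indices $p,q$ both differ from that $\omega$'s index, so that the result is $0$. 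Since sums and products of commuting projections are again projections, the $P(\omega _i)$ generate a commutative ring of projections inside $\pscript (\ascript ^*)$ on which $x\vee y=x+y+xy$ is associative and idempotent and, by Theorem~\ref{thm41}(c), equals the poset join. Iterating Theorem~\ref{thm41}(c) and discarding the products of three or more distinct generators (which vanish by the key lemma) then gives
\[
P(A)=\bigvee _{\omega \in A}P(\omega )\qquad\text{for every nonempty }A\in\ascript ,
\]
with $P(\emptyset )=0$.

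Parts (a) and (b) are then short. For (a), associativity and idempotency of $\vee$ give $P(A\cup B)=\bigvee _{\omega \in A\cup B}P(\omega )=\bigl(\bigvee _{\omega \in A}P(\omega )\bigr)\vee \bigl(\bigvee _{\omega \in B}P(\omega )\bigr)=P(A)\vee P(B)$, and $P(A)\vee P(B)=P(A)+P(B)+P(A)P(B)$ by Theorem~\ref{thm41}(c), which applies because $P(A)P(B)=P(B)P(A)$. For (b), I would verify directly that $P(\Omega )$ fixes the basis $\brac{\omega _i^*,\omega _i^*\omega _j^*}$: applied to $\omega _k^*$, the linear part $\sum _iP(\omega _i)$ of $P(\Omega )$ returns $\omega _k^*+\sum _{i\ne k}\omega _i^*\omega _k^*$ and the quadratic part $\sum _{i<j}P(\omega _i)P(\omega _j)$ returns $\sum _{i\ne k}\omega _i^*\omega _k^*$, which cancel down to $\omega _k^*$; applied to $\omega _k^*\omega _l^*$, the linear part returns $0$ and the quadratic part returns $\omega _k^*\omega _l^*$. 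Hence $P(\Omega )=I$.

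For (c), the implication $A\subseteq B\Rightarrow P(A)\le P(B)$ is immediate, since $P(A)$ is then a join over a subfamily of the one defining $P(B)$. For the converse I would argue by contraposition: if $\omega _k\in A\setminus B$ then $P(\omega _k)=P(\brac{\omega _k})\le P(A)$, so $P(A)\le P(B)$ would force $P(\omega _k)P(B)=P(\omega _k)$; evaluating both sides at $\omega _k^*$ gives $\omega _k^*$ on the right, whereas $\omega _k\notin B$ makes $P(B)\omega _k^*=\sum _{\omega _j\in B}\omega _j^*\omega _k^*$ and $P(\omega _k)$ fixes each summand, so the left side is $\sum _{\omega _j\in B}\omega _j^*\omega _k^*$, a combination of quadratic basis vectors that cannot equal $\omega _k^*$ — a contradiction. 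Finally (d) follows from (a): two applications of (a) to $(A\cupdot B)\cupdot C$ give
\[
P(A\cupdot B\cupdot C)=P(A)+P(B)+P(C)+P(A)P(B)+P(A)P(C)+P(B)P(C)+P(A)P(B)P(C),
\]
while substituting $P(A\cupdot B)=P(A)+P(B)+P(A)P(B)$ and its analogues into the asserted right side and reducing modulo $2$ yields $P(A)+P(B)+P(C)+P(A)P(B)+P(A)P(C)+P(B)P(C)$. These agree once $P(A)P(B)P(C)=0$; and since $P(A),P(B),P(C)$ are sums of the $P(\omega _i)$ and their pairwise products, every term of the expanded product contains generators $P(\omega _a),P(\omega _b),P(\omega _c)$ with $a\in A$, $b\in B$, $c\in C$, hence with $a,b,c$ distinct by disjointness, and commuting these to the front exhibits the term as a multiple of $P(\omega _a)P(\omega _b)P(\omega _c)=0$.

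I expect the only real work to be the two ``base'' computations — the triple-product identity $P(\omega _i)P(\omega _j)P(\omega _k)=0$ for distinct indices and the verification $P(\Omega )=I$ — each being a finite case analysis on the basis $\brac{\omega _i^*,\omega _i^*\omega _j^*}$. Once these are in hand, everything else is formal manipulation inside the commutative ring generated by the $P(\omega _i)$, so I anticipate no conceptual obstacle beyond keeping the bookkeeping straight.
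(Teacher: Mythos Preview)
Your proposal is correct. The principal difference from the paper's proof is organizational: you isolate the identity $P(\omega _i)P(\omega _j)P(\omega _k)=0$ for distinct indices as a standalone lemma and use it to obtain the clean formula $P(A)=\bigvee _{\omega\in A}P(\omega )$, after which (a), (c) and (d) are formal lattice manipulations in the commutative ring generated by the $P(\omega _i)$. The paper instead proves (a) by directly expanding the product $P(A)P(B)$ from the defining sums and matching terms against $P(A\cup B)$, and then deduces (c) and (d) from (a). In particular, for (d) the paper simply asserts that $P(A)P(B)P(C)=0$ for mutually disjoint $A,B,C$ without further argument; your key lemma supplies exactly the justification that step needs. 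Part (b) is handled the same way in both proofs. Your route is a bit more conceptual and makes the underlying Boolean structure explicit, while the paper's is more computational; neither requires any idea the other lacks.
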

\begin{proof}
(a)\enspace Letting $A=\brac{\alpha _1,\ldots ,\alpha _r}$, $B=\brac{\beta _1,\ldots ,\beta _s}$ we have
\begin{align*}
P(A)P(B)
  &=\sqbrac{P(\alpha _1)+\cdots +P(\alpha _r)+P(\alpha _1)P(\alpha _2)+\cdots +P(\alpha _{r-1})P(\alpha _r)}\\
  &\quad\cdot\sqbrac{P(\beta _1)+\cdots +P(\beta _s)+P(\beta _1)P(\beta _2)+\cdots +P(\beta _{s-1})P(\beta _s)}\\
  &=\sum\brac{P(\alpha _i)\colon\alpha _i\in AB}\\
  &\quad +\sum\brac{P(\alpha _i)P(\beta _j)\colon\alpha _i\in AB',\beta _j\in A'B}
\end{align*}
It follows that
\begin{align*}
P(A&\cup B)\\
  &=\sum\brac{P(\omega _i)\colon\omega _i\in A\cup B}
  +\sum\brac{P(\omega _i)P(\omega _j)\colon\omega _i,\omega _j\in A\cup B,i<j}\\
  &=P(A)+P(B)+P(A)P(B)
\end{align*}
Applying Theorem~\ref{thm41}(c) we conclude that $P(A\cup B)=P(A)\vee P(B)$.\newline
(b)\enspace Since
\begin{equation*}
P(\Omega )=\sum _{i=1}^nP(\omega _i)+\sum _{i<j=1}^nP(\omega _i)P(\omega _j)
\end{equation*}
we see that
$P(\Omega )\omega _i^*=\omega _i$ for $i=1,\ldots ,n$ and
$P(\Omega )\omega _i^*\omega _j^*=\omega _i^*\omega _j^*$ for $i,j=1,\ldots ,n$. Hence, $P(\Omega )\phi =\phi$ for all $\phi\in\ascript ^*$ so $P(\Omega )=I$.
(c)\enspace If $A\subseteq B$, then by (a) we have
\begin{equation*}
P(B)=P(A\cup B)=P(A)+P(B)+P(A)P(B)
\end{equation*}
Hence, $P(A)P(B)=P(B)P(A)=P(A)$ so $P(A)\le P(B)$. Conversely, assume that $P(A)\le P(B)$ where
$A=\brac{\alpha _1,\ldots ,\alpha _r}$, $B=\brac{\beta _1,\ldots ,\beta _s}$. Then 
\begin{align*}
P(A)&=\alpha _1^*+\cdots +\alpha _r^*+\alpha _1^*\alpha _2^*+\cdots +\alpha _{r-1}^*\alpha _r^*=P(A)P(B)\\
  &=\paren{\alpha _1^*+\cdots +\alpha _r^*+\alpha _1^*\alpha _2^*+\cdots +\alpha _{r-1}^*\alpha _r^*}\\
  &\quad\paren{\beta _1^*+\cdots +\beta _s^*+\beta _1^*\beta _2^*+\cdots +\beta _{s-1}^*\beta _s^*}
\end{align*}
If $\alpha _i\not\in B$ for some $i=1,\ldots ,r$, then $\alpha _i^*$ cannot appear in the product on the right side which is a contradiction. Hence, $\alpha _i\in B$ for $i=1,\ldots ,r$, so $A\subseteq B$.
(d)\enspace By (a) we have that
\begin{align*}
P\paren{A\cupdot B\cupdot C}&=P(A)+P\paren{B\cupdot C}+P(A)P\paren{B\cupdot C}\\
  &=P(A)+P(B)+P(C)+P(B)P(C)\\
  &\quad +P(A)\sqbrac{P(B)+P(C)+P(B)P(C)}\\
  &=P(A)+P(B)+P(C)+P(B)P(C)+P(A)P(B)\\
  &\quad +P(A)P(C)+P(A)P(B)P(C)
\end{align*}
Since $A$, $B$ and $C$ are mutually disjoint, we have that $P(A)P(B)P(C)=0$. Hence, by (a) again, we conclude that
\begin{align*}
P&\paren{A\cupdot B}+P\paren{A\cupdot C}+P\paren{B\cupdot C}+P(A)+P(B)+P(C)\\
  &=P(A)+P(B)+P(A)P(B)+P(A)+P(C)+P(A)P(C)\\
  &\quad +P(B)+P(C)+P(B)P(C)+P(A)+P(B)+P(C)\\
  &=P\paren{A\cupdot B\cupdot C}
\end{align*}
Hence, $P(\cdot )$ is grade-2 additive.
\end{proof}

\begin{exam}{1}    % Example 1.
Letting $\Omega =\brac{\omega _1,\omega _2}$, an ordered basis for $\ascript ^*$ is
$\omega _1^*,\omega _2^*,\omega _1^*\omega _2^*$. In terms of this basis we have
\begin{equation*}
P(\omega _1)=\left[\begin{matrix}\noalign{\smallskip}1&0&0\\
  \noalign{\smallskip}0&0&0\\\noalign{\smallskip}0&1&1\\\end{matrix}\right]\quad
  P(\omega _2)=\left[\begin{matrix}\noalign{\smallskip}0&0&0\\
  \noalign{\smallskip}0&1&0\\\noalign{\smallskip}1&0&1\\\end{matrix}\right]\quad
  P(\omega _1)P(\omega _2)=\left[\begin{matrix}\noalign{\smallskip}0&0&0\\
  \noalign{\smallskip}0&0&0\\\noalign{\smallskip}1&1&1\\\end{matrix}\right]\quad
 \end{equation*}
and $P(\omega _1)+P(\omega _2)+P(\omega _1)P(\omega _2)=I$. Projections need not commute. For instance, let $Q$ be the projection
\begin{equation*}
Q=\left[\begin{matrix}\noalign{\smallskip}1&0&0\\
  \noalign{\smallskip}0&0&0\\\noalign{\smallskip}0&0&0\\\end{matrix}\right]
 \end{equation*}
Then $QP(\omega _2)=0$ but
\begin{equation*}
P(\omega _2)Q=\left[\begin{matrix}\noalign{\smallskip}0&0&0\\
  \noalign{\smallskip}0&0&0\\\noalign{\smallskip}1&0&0\\\end{matrix}\right]
 \end{equation*}
\end{exam}

It follows from Theorem~\ref{thm44} that if $f\colon\Omega\to\real$ is a random variable, then the corresponding observable $P^f\colon\bscript (\real )\to\pscript (\ascript ^*)$ is unital, strongly monotone, grade-2 additive and satisfies $P^f(A\cup B)=P^f(A)\vee P^f(B)$ for all $A,B\in\bscript (\real )$. Our observable terminology is at odds with the usual ``intrinsic'' point of view which is observation (or measurement) independent and one can think of
$P(\cdot )$ as a mathematical construct and not refer to it as an observable.

\section{Preclusion} % Section 5
For a physical system described by $(\Omega ,\ascript )$ there are frequently theoretical or experimental reasons for excluding certain sets $A,B,\ldots\in\ascript$ from consideration. Such sets are said to be \textit{precluded}. For example, one may have an underlying quantum measure $\mu$ on $\ascript$ and sets of measure zero
($\mu (A)=0$) or sets of small measure ($\mu (A)\approx 0$) may be precluded \cite{gt09, sor94, sor07, sor09}. In a physically realistic situation, precluded events should not occur. By convention we assume that $\emptyset$ is precluded.

Let $\ascript _p\subseteq\ascript$ be the set of precluded events. We say that a coevent $\phi\in\ascript ^*$ is
\textit{preclusive} if $\phi (A)=0$ for all $A\in\ascript _p$. The set of preclusive coevents form a subspace
$\ascript _p^*$ of $\ascript ^*$ and considering $\ascript _p^*$ gives an important way of reducing the possible realities for a physical system \cite{gt09, sor07, sor09}. We now present another way of reducing the possible realities. If $\ascript _p=\brac{A_1,\ldots ,A_m}$ we say that $\phi\in\ascript ^*$ is \textit{precluding} if
$P(A_1\cup\cdots\cup A_m)\phi =0$. Thus, $\phi$ is precluding if and only if $\phi$ is in the null space of
$P(A_1\cup\cdots\cup A_m)$. This is again a subspace of $\ascript ^*$ which we will later show is contained in
$\ascript _p^*$. Applying Theorems~\ref{thm41}(c) and \ref{thm44}(a) we have that
\begin{align*}
P(A_1\cup\cdots\cup A_m)&=\vee P(A_i)=\sqbrac{\wedge P(A_i)'}'=\sqbrac{P(A_1)'\cdots P(A_m)'}'\\
  &=I+\sqbrac{I+P(A_1)}\cdots\sqbrac{I+P(A_m)}
\end{align*}
It follows that $\phi$ is precluding if and only if
\begin{equation*}
P(A_1)'\cdots P(A_m)'\phi=\phi
 \end{equation*}
Thus, the precluding coevents are precisely the coevents in the range of the projection $P(A_1)'\cdots P(A_m)'$.

\begin{thm}       % Theorem 5.1
\label{thm51}
{\rm (a)}\enspace If $P(A)\phi =0$, then $\phi (A)=0$ for $A\in\ascript$, $\phi\in\ascript ^*$.\newline
{\rm (b)}\enspace If $\phi$ is precluding, the $\phi$ is preclusive.
\end{thm}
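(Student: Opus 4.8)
The plan is to reduce both parts to a single pointwise identity for the master observable, namely that $\paren{P(A)\phi}(A)=\phi (A)$ for every $A\in\ascript$ and every $\phi\in\ascript ^*$. Granting this identity, part~(a) is immediate: if $P(A)\phi =0$, then $\phi (A)=\paren{P(A)\phi}(A)=0(A)=0$.

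To prove the identity I would fix $A=\brac{\alpha _1,\ldots ,\alpha _r}$ and expand $\phi$ in the basis of Theorem~\ref{thm33}, say $\phi =\sum _i a_i\omega _i^*+\sum _{i<j}b_{ij}\omega _i^*\omega _j^*$ with $a_i,b_{ij}\in\integers _2$. Using the definition of the $P(\omega _i)$ together with $P(A)=\sum\brac{P(\omega _i)+P(\omega _i)P(\omega _j)\colon\omega _i,\omega _j\in A,\ i<j}$, a short case analysis — bearing in mind the $\integers _2$ arithmetic, e.g.\ that for $\omega _k\in A$ each pair $\brac{\omega _i,\omega _k}$ with $\omega _i\in A$, $i\ne k$, produces the term $\omega _i^*\omega _k^*$ once from the linear part of $P(A)$ and once from the quadratic part, so these cancel — yields the four action formulas: $P(A)\omega _k^*=\omega _k^*$ when $\omega _k\in A$ and $P(A)\omega _k^*=\sum _{\omega _i\in A}\omega _i^*\omega _k^*$ when $\omega _k\notin A$; and $P(A)\omega _k^*\omega _l^*=\omega _k^*\omega _l^*$ when at least one of $\omega _k,\omega _l$ lies in $A$, while $P(A)\omega _k^*\omega _l^*=0$ when neither does. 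Now evaluate $P(A)\phi$ at the set $A$ itself. Since $\omega _k^*(A)=1$ exactly when $\omega _k\in A$, and $\omega _k^*\omega _l^*(A)=1$ exactly when both $\omega _k,\omega _l\in A$, every summand $\omega _i^*\omega _k^*$ coming from a coefficient $a_k$ with $\omega _k\notin A$, and every quadratic summand having a factor outside $A$, evaluates to $0$ at $A$. What remains is $\paren{P(A)\phi}(A)=\sum _{\omega _k\in A}a_k+\sum _{\omega _k,\omega _l\in A,\,k<l}b_{kl}$, which is precisely $\phi (A)$.

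For part~(b), let $\ascript _p=\brac{A_1,\ldots ,A_m}$ and put $A=A_1\cup\cdots\cup A_m$, so that $\phi$ precluding means $P(A)\phi =0$. For each $i$ we have $A_i\subseteq A$, so strong monotonicity of the master observable (Theorem~\ref{thm44}(c)) gives $P(A_i)\le P(A)$, i.e.\ $P(A_i)P(A)=P(A_i)$. Hence $P(A_i)\phi =P(A_i)\paren{P(A)\phi}=P(A_i)\cdot 0=0$, and part~(a) yields $\phi (A_i)=0$. As this holds for every $i$, $\phi$ vanishes on all of $\ascript _p$, so $\phi$ is preclusive.

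The only real work is the basis computation behind the identity in part~(a); everything else is formal. The main point to be careful about there is the bookkeeping of the double sum $\sum _{i<j}P(\omega _i)P(\omega _j)$ acting on the basis vectors $\omega _k^*$ and $\omega _k^*\omega _l^*$ and the cancellations it produces over $\integers _2$; once the four action formulas are in hand, the evaluation at $A$ and the deduction of part~(b) are routine.
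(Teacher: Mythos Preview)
Your proof is correct. Part~(b) is identical to the paper's argument. For part~(a) you take a slightly different route: the paper assumes $P(A)\phi=0$, reads off the resulting constraints on the coefficients ($a_i=0$ for $\omega_i\in A$, $b_{ij}=0$ for $\omega_i,\omega_j\in A$, and $a_j+b_{ij}=0$ for $\omega_i\in A$, $\omega_j\notin A$), writes out the surviving form of $\phi$, and then checks directly that $\phi(A)=0$. You instead establish the pointwise identity $\paren{P(A)\phi}(A)=\phi(A)$ for \emph{all} $\phi$, from which (a) is immediate. Both rest on exactly the same basis computation of $P(A)$ acting on $\omega_k^*$ and $\omega_k^*\omega_l^*$, so the difference is organizational; your packaging is a bit cleaner and yields a slightly stronger statement along the way.
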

\begin{proof}
(a)\enspace Without loss of generality we can assume that $A=\brac{\omega _1,\ldots ,\omega _m}$ and that
\begin{equation*}
\phi = a_1\omega _1^*+\cdots +a_n\omega _n^*
  +b_{12}\omega _1^*\omega _2^*+\cdots +b_{n-1,n}\omega _{n-1}^*\omega _n^*
\end{equation*}
for $a_i,b_{ij}\in\integers _2$. Since
\begin{equation*}
P(A)\phi\!=\!\sqbrac{P(\omega _1)+\cdots +P(\omega _m)      %\! to pull in from overfull
  +P(\omega _1)P(\omega _2)+\cdots +P(\omega _{m-1})P(\omega _m)}\phi=0
 \end{equation*}
we have that $a_1=\cdots =a_m=0$, $b_{ij}=0$ for $i,j\le m$ and $a_j+b_{ij}=0$ for $j>m$, $i\le m$. We conclude that $\phi$ has the form
\begin{align*}
\phi&=a_{m+1}\omega _{m+1}+\cdots +a_n\omega _n^*+a_{m+1}\omega _{m+1}^*\omega _1^*
  +\cdots +a_{m+1}\omega _{m+1}^*\omega _m^*\\
  &\quad +a_{m+2}\omega _{m+2}^*\omega _1^*+\cdots +a_{m+2}\omega _{m+2}^*\omega _m^*
  +a_n\omega _n^*\omega _1^*+\cdots +a_n\omega _n^*\omega _m^*\\
  &\quad +b_{m+1,m+2}\omega _{m+1}^*\omega _{m+2}^*+\cdots +b_{n-1,n}\omega _{n-1}^*\omega _n^*
\end{align*}
It follows that $\phi (A)=0$.
(b)\enspace Assume that $A_p=\brac{A_1,\ldots ,A_m}$. If $\phi$ is precluding, then
$P(A_1\cup\cdots\cup A_m)\phi =0$. By Theorem~\ref{thm44}(c) $P(\cdot )$ is monotone so that
\begin{equation*}
P(A_i)\phi =P(A_i)P(A_1\cup\cdots\cup A_m)\phi =0
\end{equation*}
for $i=1,\ldots ,m$. By (a) we have that $\phi (A_i)=0$, $i=1,\ldots ,m$. Hence, $\phi\in\ascript _p^*$.
\end{proof}

A \textit{precluding basis} is a set $S$ of precluding coevents such that every precluding coevent is a sum of elements of $S$. The definition of a preclusive basis is similar. Although such bases are not unique, they give an efficient way of describing all precluding (or preclusive) coevents.

\begin{exam}{2}    % Example 2.
Let $\Omega =\brac{\omega _1,\omega _2,\omega _3}$ and
$\ascript _p=\brac{\emptyset ,\brac{\omega _1,\omega _2}}$. It is easy to check that a preclusive basis consists of
$\omega _3^*,\omega _1^*\omega _3^*,\omega _2^*\omega _3^*,\omega _1^*+\omega _2^*,\omega _1^*
  +\omega _1^*\omega _2^*$. To find the precluding coevents we let $A=\brac{\omega _1,\omega _2}$ and solve the equation $P(A)\phi =0$. Thus,
\begin{align*}
&\sqbrac{P(\omega _1)+P(\omega _2)+P(\omega _1)P(\omega _2)}(a\omega _1^*+b\omega _2^*+c\omega _3^*
  +d\omega _1^*\omega _2^*+e\omega _1^*\omega _3^*+f\omega _2^*\omega _3^*)\\
  &\quad =a\omega _1^*+b\omega _1^*\omega _2^*+c\omega _1^*\omega _3^*+d\omega _1^*\omega _2^*
  +e\omega _1^*\omega _3^*+a\omega _1^*\omega _2^*+b\omega _2^*\\
  &\qquad +c\omega _2^*\omega _3^*+d\omega _1^*\omega _2^*+f\omega _2^*\omega _3^*
  +a\omega _1^*\omega _2^*+b\omega _1^*\omega _2^*+d\omega _1^*\omega _2^*)\\
  &\quad =0
\end{align*}
It follows that $a=b=d=0$, $c+e=c+f=0$. Hence,
\begin{equation*}
\phi =c\omega _3^*+c\omega _1^*\omega _3^*+c\omega _2^*\omega _3^*
\end{equation*}
so the only nonzero precluding coevent is
\begin{equation*}
\phi =\omega _3^*+\omega _1^*\omega _3^*+\omega _2^*\omega _3^*
\end{equation*}
Of course, $\phi$ is a precluding basis. Notice that $\phi$ is unital. This example shows that preclusive coevents need not be precluding.
\end{exam}

\begin{exam}{3}    % Example 3.
Let $\Omega =\brac{\omega _1,\omega _2,\omega _3}$ and
$\ascript _p=\brac{\emptyset ,\brac{\omega _1},\brac{\omega _2}}$. A preclusive basis consists of
$\omega _3^*,\omega _1^*\omega _3^*,\omega _2^*\omega _3^*,\omega _1^*\omega _2^*$. The only nonzero precluding coevent is $\phi$ obtained in Example~2. This is because
\begin{equation*}
A=\brac{\omega _1,\omega _2}=\brac{\omega _1}\cup\brac{\omega _2}
\end{equation*}
\end{exam}

\begin{exam}{4}    % Example 4.
Let $\Omega =\brac{\omega _!,\omega _2,\omega _3}$ and let $\ascript _p=\brac{\emptyset ,A}$ where
$A=\brac{\omega _1}$. To find the precluding coevents we solve the equation $P(A)\phi=0$. Thus,
\begin{align*}
P&(\omega _1)\sqbrac{a\omega _1^*+b\omega _2^*+c\omega _3^*+d\omega _1^*\omega _2^*
  +e\omega _1^*\omega _3^*+f\omega _2^*\omega _3^*}\\
  &\quad =a\omega _1^*+b\omega _1^*\omega _2^*+c\omega _1^*\omega _3^*+d\omega _1^*\omega _2^*
  +e\omega _1^*\omega _3^*=0
\end{align*}
Hence, $a=b+d=c+e=0$. We conclude that $\phi$ has the form
\begin{equation*}
\phi =b\omega _2^*+c\omega _3^*+b\omega _1^*\omega _2^*+c\omega _1^*\omega _3^*
  +f\omega _2^*\omega _3^*
\end{equation*}
Thus, a precluding basis consists of $\omega _2^*\omega _3^*$, $\omega _2^*+\omega _1^*\omega _2^*$ and
$\omega _3^*+\omega _1^*\omega _3^*$. The last two are not unital but sums with $\omega _2^*\omega _3^*$ are unital.
\end{exam}

We now discuss events $B$ that can actually occur. That is there exists a preclusive or precluding coevent $\phi$ such that $\phi (B)=1$. It would be nice if whenever $B$ is not precluded, then such a $\phi$ exists. But this is asking too much as simple examples show. However, we do have the following result which gives a kind of duality between preclusive and precluding coevents.

\begin{thm}       % Theorem 5.2
\label{thm52}
Let $\ascript _p=\brac{A_1,\ldots ,A_m}$, $A=A_1\cup\cdots\cup A_m$ and $B\in\ascript$.\newline
{\rm (a)}\enspace If $BA'\ne\emptyset$ then there exists a preclusive coevent $\phi$ such that $\phi (B)=1$.
{\rm (b)}\enspace If there exists a precluding coevent $\phi$ such that $\phi (B)=1$, then $BA'\ne\emptyset$.
\end{thm}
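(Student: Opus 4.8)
The plan is to treat the two parts independently; each reduces to a short application of earlier results once the right witness (resp.\ the right contradiction) is set up.

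For part (a): since $BA'\ne\emptyset$, I would choose $\omega\in B$ with $\omega\notin A$ and take $\phi=\omega^*$, the containment map. By Theorem~\ref{thm21}(a) (applied to a single point) $\omega^*$ is grade-1 additive, hence grade-2 additive, so $\omega^*\in\ascript^*$. Since $\omega\notin A=A_1\cup\cdots\cup A_m$ we have $\omega\notin A_i$ for every $i$, so $\omega^*(A_i)=0$ for all $i$; also $\omega^*(\emptyset)=0$. Hence $\phi=\omega^*$ is preclusive, and $\phi(B)=\omega^*(B)=1$ because $\omega\in B$. This establishes (a).

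For part (b): suppose $\phi$ is a precluding coevent with $\phi(B)=1$, so by definition $P(A)\phi=0$. Assume, for contradiction, that $BA'=\emptyset$, i.e.\ $B\subseteq A$. By Theorem~\ref{thm44}(c) the master observable is strongly monotone, so $P(B)\le P(A)$, which by the definition of the order on $\pscript(\ascript^*)$ means $P(B)P(A)=P(B)$. Therefore $P(B)\phi=P(B)P(A)\phi=0$. By Theorem~\ref{thm51}(a) applied with $B$ in place of $A$, $P(B)\phi=0$ forces $\phi(B)=0$, contradicting $\phi(B)=1$. Hence $BA'\ne\emptyset$, which proves (b).

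Neither direction presents a genuine obstacle; the argument is essentially bookkeeping with the tools already in hand. The two places that merit a moment of care are: in (a), recording that the containment map really is a coevent (grade-1 $\Rightarrow$ grade-2 additivity) and that the conventionally precluded $\emptyset$ is harmless since $\omega^*(\emptyset)=0$; and in (b), chaining strong monotonicity (Theorem~\ref{thm44}(c)) with the implication ``$P(\cdot)\phi=0\Rightarrow\phi(\cdot)=0$'' of Theorem~\ref{thm51}(a), being sure to apply it to $B$ (using $B\subseteq A$ to pass from $P(A)\phi=0$ to $P(B)\phi=0$) rather than to the individual $A_i$.
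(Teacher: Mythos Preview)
Your proof is correct and follows essentially the same approach as the paper: for (a) you both use the containment map $\omega^*$ for some $\omega\in BA'$ as the preclusive witness, and for (b) you both pass from $B\subseteq A$ to $P(B)\phi=P(B)P(A)\phi=0$ via strong monotonicity (Theorem~\ref{thm44}(c)) and then invoke Theorem~\ref{thm51}(a). The only cosmetic difference is that the paper phrases (b) as a contrapositive while you phrase it as a contradiction, and you add the (harmless) explicit remarks about $\omega^*$ being a coevent and about $\omega^*(\emptyset)=0$.
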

\begin{proof}
(a)\enspace If $\omega\in BA'$, then $\omega\in BA'_i$, $i=1,\ldots ,m$. Hence, $\omega ^*(B)=1$ and
$\omega ^*(A_i)=0$, $i=1,\ldots ,m$. We conclude that $\omega ^*$ is preclusive.
(b)\enspace Suppose $BA'=\emptyset$. Then $B\subseteq A$ and if $\phi$ is precluding, then $P(A)\phi =0$. Hence, by Theorem~\ref{thm44}(c) we have
\begin{equation*}
P(B)\phi =P(B)P(A)\phi =0
\end{equation*}
Applying Theorem~\ref{thm51}(a), we conclude that $\phi (B)=0$. Hence, there is no precluding coevent $\phi$ such that $\phi (B)=1$. We have thus proved the contrapositive of (b) so (b) holds.
\end{proof}

\begin{cor}       % Corollary 5.3
\label{cor53}
Let $\ascript _p=\brac{A_1,\ldots ,A_m}$, $A=A_1\cup\cdots A_m$ and $B\in\ascript$.\newline
{\rm (a)}\enspace If $\phi (B)=0$ for every preclusive $\phi$, then $B\subseteq A$.
{\rm (b)}\enspace If $B\subseteq A$, then $\phi (B)=0$ for every precluding $\phi$.
\end{cor}

The result in Theorem~\ref{thm52}(a) does not hold if preclusive is replaced by precluding. In Example~4,
$A_1=\brac{\omega _1}$ is the only nonempty precluded event. Letting $B=\brac{\omega _1,\omega _2}$ we have that $BA'=\brac{\omega _2}\ne\emptyset$. However, all the precluding coevents listed in Example~4 vanish on $B$. Hence, $\phi (B)=0$ for all precluding coevents. The result in Theorem~\ref{thm52}(b) does not hold if precluding is replaced by preclusive. In Example~2, letting $A=\brac{\omega _1,\omega _2}$ and
$B=\brac{\omega _1}$, $\phi =\omega _1^*+\omega _2^*$ is preclusive and $\phi (B)=1$. However,
$BA'=\emptyset$.

Examples~2 and 3 have the pleasant feature that there is a unique nonzero precluding coevent. However, the next example shows that there can be many preclusive coevents and no nonzero precluding coevent.

\begin{exam}{5}    % Example 5.
In the three-slit experiment $\Omega =\brac{\omega _1,\omega _2,\omega _3}$ considered previously, suppose
$\brac{\omega _1,\omega _2}$ and $\brac{\omega _2,\omega _3}$ are the only nonempty precluded events. Since
\begin{equation*}
\Omega =\brac{\omega _1,\omega _2}\cup\brac{\omega _2,\omega _3}
\end{equation*}
we have that $\phi$ is precluding if and only if $\phi =P(\Omega )\phi =0$ so the only precluding coevent is $0$. However, there are many preclusive coevents. For example,
$\omega _1^*+\omega _2^*+\omega _3^*$, $\omega _1^*+\omega _1^*\omega _2^*$,
$\omega _3^*+\omega _2^*\omega _3^*$, $\omega _1^*\omega _3^*$ form a preclusive basis.
\end{exam}

It should be mentioned that in previous works it has usually been assumed that the union of mutually disjoint precluded events is precluded. However, we did not make this assumption here.\vskip 1.5pc

\noindent\textbf{\large Acknowledgement.}\enspace
The author thanks R.~Sorkin for reading a preliminary version of this manuscript, correcting various errors and suggesting improvements.

\end{document}